\newtheorem{mydef}{Definition}
\newtheorem{lem}{Lemma}
\journal{Journal of \LaTeX\ Templates}
\begin{document}

\begin{frontmatter}

\title{Utility-efficient Differentially Private K-means Clustering based on Cluster Merging}

\author{Tianjiao Ni, Minghao Qiao, Zhili Chen\footnote{Zhili Chen is the corresponding author. Email: zlchen@ahu.edu.cn}, Shun Zhang, Hong Zhong}
\address{School of Computer Science and Technology, Anhui University, Hefei, China}
%
%
%

\begin{abstract}
Differential privacy is widely used in data analysis. State-of-the-art $k$-means clustering algorithms with differential privacy
typically add an equal amount of noise to centroids for each iterative computation. In this paper, we propose a novel differentially private $k$-means clustering algorithm, DP-KCCM, that significantly improves the utility of clustering by adding adaptive noise and merging clusters. Specifically, to obtain $k$ clusters with differential privacy, the algorithm first generates $n \times k$ initial centroids, adds adaptive noise for each iteration to get $n \times k$ clusters, and finally merges these clusters into $k$ ones. We theoretically prove the differential privacy of the proposed algorithm. Surprisingly, extensive experimental results show that: 1) cluster merging with equal amounts of noise improves the utility somewhat; 2) although adding adaptive noise only does not improve the utility, combining both cluster merging and adaptive noise further improves the utility significantly.
\end{abstract}

\begin{keyword}
K-means \sep Cluster \sep Differential Privacy
\end{keyword}

\end{frontmatter}


\section{Introduction}

With the rapid development of Internet technology, third-party applications have produced a large amount of user data. The correct use of these data is able to create incalculable value for governments, companies and individuals. How to extract useful information from user data is currently a hot research direction \cite{Chen2002Data,Tambe2016EFFECTIVE}. Clustering algorithms are widely used to complete this task in the field of data analysis \cite{Grabmeier2002Techniques,Patel2016The}. The goal of clustering is to divide elements of a dataset into different groups so that the elements in the same group have high similarity. There are a large number of clustering algorithms \cite{Sun2008Clustering,Antonenko2012Using,Ding2015Research,Mohebbi2016Iterative,Nissim2017Clustering,Stemmer2019Locally}. Among them, the $k$-means clustering is one of the most popular methods for numeric data.

Recently, many applications adopted the $k$-means clustering algorithm. Javadi et al. \cite{javadi2017classification} classified aquifer vulnerability using $k$-means cluster analysis. Han et al. \cite{han2018kclp} factored out $k$-means cluster-based location privacy protection scheme for Internet of Things. Shakeel et al. \cite{shakeel2018cloud} used $k$-means clustering to diagnosis of diabetes mellitus. Wu et al. \cite{wu2018deep} used $k$-means in the compressing convolutions of convolutional neural network. Reza et al. \cite{reza2019rice} adopted $k$-means clustering with graph-cut segmentation to estimate rice yield. Omrani et al. \cite{omrani2019land} presented the artificial neural network-based land transformation model, which uses the $k$-means clustering algorithm implemented within the Spark high-performance compute environment. However, most of these applications fail to consider disclosure of sensitive information, which might bring immeasurable threats to users \cite{xiong2014secure,gao2010approach}.

To solve the privacy problem, differential privacy \cite{Dwork2006Differential,Dwork2006Calibrating} is proposed as a powerful privacy protection technique and has been extensively used \cite{Dwork2008Differential,dankar2012application,Xiao2015Protecting}.
Recently, several state-of-the-art $k$-means clustering algorithms with differential privacy have been proposed. For example,  Yu et al. \cite{Yu2016Outlier} presented a differentially private $k$-means clustering scheme and improved its utility by selecting initial centroids with the distribution density of elements. Su et al. \cite{su2016differentially} analyzed several existing differentially private $k$-means clustering algorithms and improved one of them by selecting initial centroids based on the concept of sphere packing. However, these algorithms still suffer the issue of lacking high utility due to adding large amounts of noise. Thus, how to improve the utility of differentially private $k$-means clustering remains as a key question.

In this paper, to address the above-mentioned utility issue, we propose a novel differentially private $k$-means clustering algorithm based on cluster merging (DP-KCCM). DP-KCCM first partitions the data into $n \times k$ clusters with differential privacy, and then merges these clusters into required $k$ ones. The main idea is that the Laplace noises added to cluster centroids are random, and cluster merging would cancel the noises each other, and thus improve the utility. More interestingly, we find that combining cluster merging with adaptive noise is able to further improve the cluster utility.

The main contributions of this paper are as follows:
1) We propose a utility-efficient, differentially private $k$-means clustering algorithm based on cluster merging.
2) We design a privacy budget (i.e., the privacy parameter of differential privacy, cf. Definition~\ref{def:dp}) allocation to work with cluster merging to further improve the cluster utility.
3) Extensive experimental results show that our algorithm is superior to the state-of-the-art ones.

The rest of this paper is organized as follows. In Section~\ref{sec:background}, we introduce the background knowledge of this paper. Section~\ref{sec:dp-kccm} describes our algorithm in detail and establishes that the algorithm satisfies differential privacy. In Section~\ref{sec:evaluation}, we carry out extensive experiments, and compare the utility of our algorithm with those of the existing algorithms. We conclude in Section~\ref{sec:conclusion}.

\section{Background}\label{sec:background}

In this section, we first introduce the notion of differential privacy and the algorithms of K-means clustering, we then present the problem statement. Some notations used in this paper are described in Table~\ref{tab:notation}.

\begin{table}[!htb]
\caption{Notations and descriptions}\label{tab:notation}
\begin{center}
\begin{tabularx}{11cm}{lX}  
\hline                      
Notations   & Descriptions  \\
\hline
$D$   & dataset \\
$N$   & the number of data points in the dataset \\
$d$   & the dimension of the dataset \\
$x_i$   & $i$-th data point in the dataset and ranging from $[-r, r]^d$\\
$k$   & the number of the clusters \\
$C$   & the set of the centroids \\
$C_j$   & $j$-th centroid \\
$C_j^i$    & $i$-th dimension of $j$-th centroid\\
$c_j$   & $j$-th noisy centroid\\
$c_j^i$    & $i$-th dimension of $j$-th noisy centroid\\
$C_j^\ast$   & $j$-th cluster \\
$\epsilon$   & the privacy budget \\
$sum(\cdot)$   & the sum of data point values in a cluster \\
$sum^\prime(\cdot)$   & $sum^\prime(\cdot)$ is $sum(\cdot)$ with noise \\
$num(\cdot)$   & the number of data points in a cluster \\
$num^\prime(\cdot)$   & $num^\prime(\cdot)$ is $num(\cdot)$ with noise \\
$dist(x,y)$   & the distance between data points $x$ and $y$ \\
$\Delta$   & the global sensitivity of an iteration \\
$iter$   & the number of iterations \\
\hline
\end{tabularx}
\end{center}
\end{table}

\subsection{Differential Privacy}\label{sec:dp}
The notion of differential privacy requires that the outputs of a data analysis mechanism should be similar over any two adjacent datasets. The formal definition is as follows.
\begin{mydef}\label{def:dp}
({$\epsilon$}-Differential Privacy \cite{Dwork2006Differential}). A randomized algorithm $\mathcal{M}$ satisfies {$\epsilon$}-differential privacy ($\epsilon$-DP), if and only if for any pair of neighboring datasets $D$ and ${D^\prime}$, and any $\mathcal{S}$ {$\subseteq$} Range($\mathcal{M}$), we have
\begin{equation}\label{equ:dp}
Pr[\mathcal{M}(D)=\mathcal{S}] \leq e^\epsilon \cdot Pr[\mathcal{M}(D^\prime)=\mathcal{S}]
\end{equation}
\end{mydef}

In this definition, $D$'s neighboring dataset $D^\prime$ can be obtained by adding an element to or removing an element from $D$, and they can be denoted by $D \simeq D^\prime$. The Range($\mathcal{M}$) represents the set of all possible outputs of the algorithm $\mathcal{M}$. It is worth noting that the parameter $\epsilon$ is called \emph{privacy budget}, which indicates the privacy level. A smaller $\epsilon$ value
means more similar outputs resulted from neighboring datasets due to Eq.~\eqref{equ:dp}, and thus represents stronger privacy achieved. On the other hand, the greater the $\epsilon$ value is, the weaker the privacy is preserved. In the extreme, when there is no differential privacy protection, it is equivalent that the $\epsilon$ is infinitely large, and thus the privacy may be easily disclosed.

Differential privacy has good properties in composition of multiple algorithms, which are described in Lemmas~\ref{lem:parallel}, \ref{lem:sequential} and \ref{lem:post-processing}.
\begin{lem}[Parallel Composition \cite{Li2016Differential}]\label{lem:parallel}
If there are algorithms $\mathcal{M}_1,\cdots,\mathcal{M}_k$ satisfying $\epsilon_1,\cdots,\epsilon_k$ -DP, respectively, then for disjoint datasets $D_1,D_2,\cdots,D_k$, composition algorithm $\mathcal{M}(\mathcal{M}_1(D_1),\cdots,\mathcal{M}_k(D_k))$ provides
$(\max_{i \in \{1,..,k\}}\epsilon_i)$-DP.
\end{lem}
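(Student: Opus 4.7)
The plan is to reduce the statement to the single-algorithm guarantee by exploiting disjointness of the partition. First I will fix a pair of neighboring datasets $D \simeq D'$, where $D = D_1 \cup \cdots \cup D_k$ and $D'$ differs from $D$ by one element $x$. Because the $D_i$ are pairwise disjoint, $x$ lies in at most one block, say $D_j$; for every index $i \neq j$ we have $D_i = D_i'$, while $D_j$ and $D_j'$ are themselves neighbors in the sense of Definition~\ref{def:dp}. This localization of the change is the workhorse of the whole argument.

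Next I would write the output of the composite mechanism as a tuple $(y_1,\ldots,y_k)$ and, using the independence of the internal randomness of the $\mathcal{M}_i$, factor the joint probability as
\begin{equation*}
\Pr[\mathcal{M}(D) = (y_1,\ldots,y_k)] = \prod_{i=1}^{k} \Pr[\mathcal{M}_i(D_i) = y_i].
\end{equation*}
Taking the ratio against $\Pr[\mathcal{M}(D')=(y_1,\ldots,y_k)]$, every factor with $i \neq j$ cancels exactly because $D_i = D_i'$, and the surviving factor is bounded by $e^{\epsilon_j}$ by the $\epsilon_j$-DP guarantee of $\mathcal{M}_j$. Since $\epsilon_j \le \max_i \epsilon_i$, the pointwise ratio is at most $e^{\max_i \epsilon_i}$. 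For an arbitrary $\mathcal{S} \subseteq \mathrm{Range}(\mathcal{M})$, I would then sum (or integrate) the pointwise bound over all tuples in $\mathcal{S}$ to obtain the required inequality on $\Pr[\mathcal{M}(D)\in\mathcal{S}]$.

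The main subtlety, and the only place where care is needed, is justifying the factorization step when $\mathcal{S}$ is not a product set: one must rely on the implicit assumption that the random coins of the $\mathcal{M}_i$ are drawn independently, so that $\mathcal{M}(D)$ has a product distribution over the coordinate outputs and the pointwise ratio bound lifts to arbitrary measurable $\mathcal{S}$. Once that is stated explicitly, the remainder of the proof is a short chain of elementary inequalities, and the overall argument fits in a few lines.
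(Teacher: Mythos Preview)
The paper does not actually prove this lemma: it is stated with a citation to \cite{Li2016Differential} and used as a black box in the proof of Theorem~\ref{theorem1}, so there is no in-paper argument to compare against. That said, your proposal is the standard proof of parallel composition and is correct; the key step---localizing the single changed record to one block $D_j$ so that all other factors cancel in the likelihood ratio---is exactly what makes the bound $\max_i \epsilon_i$ rather than $\sum_i \epsilon_i$, and your remark about needing independent randomness across the $\mathcal{M}_i$ to justify the product factorization is the right caveat to flag.
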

Lemma~\ref{lem:parallel} shows that if the input datasets are disjoint, the privacy level provided by a parallel composition depends on the algorithm with the lowest privacy level, namely the one with the largest privacy budget.

\begin{lem}[Sequential Composition \cite{dwork2014algorithmic,Li2016Differential}]\label{lem:sequential}
If there are algorithms $\mathcal{M}_1$ satisfying $\epsilon_1$-DP, and $\mathcal{M}_2$ satisfying $\epsilon_2$-DP, then $\mathcal{M}(D) = \mathcal{M}_1(\mathcal{M}_2(D),D)$ satisfies ($\epsilon_1 + \epsilon_2$)-DP.
\end{lem}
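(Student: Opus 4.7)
The plan is to bound $\Pr[\mathcal{M}(D)=\mathcal{S}]$ by conditioning on the intermediate output of $\mathcal{M}_2$, then applying the two single-mechanism privacy guarantees one after the other so that the privacy losses add in the exponent. Concretely, I will fix an arbitrary pair of neighbors $D \simeq D^\prime$ and an arbitrary outcome $\mathcal{S}$ in $\mathrm{Range}(\mathcal{M})$, and write $\Pr[\mathcal{M}(D)=\mathcal{S}]$ as a sum (or integral, in the continuous case) over the possible intermediate outputs $t$ of $\mathcal{M}_2(D)$.

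The key steps, in order, are as follows. First, by the law of total probability,
\begin{equation*}
\Pr[\mathcal{M}(D)=\mathcal{S}] \;=\; \sum_{t} \Pr[\mathcal{M}_2(D)=t]\cdot\Pr[\mathcal{M}_1(t,D)=\mathcal{S}],
\end{equation*}
where I implicitly assume the internal randomness of $\mathcal{M}_1$ and $\mathcal{M}_2$ is independent. Second, since $\mathcal{M}_2$ is $\epsilon_2$-DP, each factor $\Pr[\mathcal{M}_2(D)=t]$ is bounded by $e^{\epsilon_2}\Pr[\mathcal{M}_2(D^\prime)=t]$. Third, viewing $\mathcal{M}_1$ with its first argument $t$ fixed as a randomized algorithm on the dataset only, the $\epsilon_1$-DP guarantee gives $\Pr[\mathcal{M}_1(t,D)=\mathcal{S}] \le e^{\epsilon_1}\Pr[\mathcal{M}_1(t,D^\prime)=\mathcal{S}]$ for every $t$. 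Substituting both bounds into the sum and pulling the constants $e^{\epsilon_1}$ and $e^{\epsilon_2}$ out gives
\begin{equation*}
\Pr[\mathcal{M}(D)=\mathcal{S}] \;\le\; e^{\epsilon_1+\epsilon_2}\sum_{t}\Pr[\mathcal{M}_2(D^\prime)=t]\cdot\Pr[\mathcal{M}_1(t,D^\prime)=\mathcal{S}] \;=\; e^{\epsilon_1+\epsilon_2}\Pr[\mathcal{M}(D^\prime)=\mathcal{S}],
\end{equation*}
which is the required bound.

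The main obstacle is conceptual rather than computational: one must justify that the $\epsilon_1$-DP property of $\mathcal{M}_1$ still applies when its first argument is the random output of $\mathcal{M}_2(D)$ rather than a value independent of $D$. The resolution is to interpret $\epsilon_1$-DP for $\mathcal{M}_1$ as holding uniformly for every fixed first argument $t$, so that conditioning on $\mathcal{M}_2(D)=t$ reduces $\mathcal{M}_1(t,\cdot)$ to an ordinary $\epsilon_1$-DP mechanism on the dataset alone; the independence of $\mathcal{M}_1$'s and $\mathcal{M}_2$'s coin tosses then lets the two bounds be applied in the same summand. A minor technical point is replacing the sum by an integral if the range of $\mathcal{M}_2$ is continuous, but the argument is otherwise identical.
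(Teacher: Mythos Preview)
Your argument is correct and is the standard proof of sequential composition: condition on the intermediate output $t$ of $\mathcal{M}_2$, apply the $\epsilon_2$-DP bound to $\Pr[\mathcal{M}_2(D)=t]$ and the $\epsilon_1$-DP bound (with $t$ held fixed) to $\Pr[\mathcal{M}_1(t,D)=\mathcal{S}]$, and recombine. Your handling of the only subtle point---that $\epsilon_1$-DP for $\mathcal{M}_1$ must be interpreted as holding for every fixed auxiliary input $t$, so that the dependence of $t$ on $D$ is irrelevant once you condition---is exactly right.

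There is nothing to compare against: the paper does not prove this lemma at all. It is stated as a background fact with citations to \cite{dwork2014algorithmic,Li2016Differential}, and the paper simply invokes it (in the proof of Theorem~\ref{theorem1}) to sum the per-iteration budgets $\epsilon_{iter}$ into the total budget $\epsilon$. Your write-up is precisely the argument one finds in those references.
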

Lemma~\ref{lem:sequential} implies that if multiple algorithms are applied to the same data set sequentially, the resulted privacy budget is the sum of these algorithms' privacy budgets.

\begin{lem}[Post-processing \cite{dwork2014algorithmic}]\label{lem:post-processing}
If there is an algorithm $\mathcal{M}_1(\cdot)$ satisfying $\epsilon$-DP, then for any algorithm $\mathcal{M}_2(\cdot)$,
$\mathcal{M}_2(\mathcal{M}_1(\cdot))$ satisfies $\epsilon$-DP.
\end{lem}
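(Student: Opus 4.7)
The plan is to reduce the randomized post-processor to a deterministic one by isolating its internal randomness. I would write $\mathcal{M}_2(y) = f(y, r)$ for a deterministic map $f$ and an independent random seed $r$ whose distribution does not depend on the input dataset. Conditioning on $r$, proving $\epsilon$-DP of the composition reduces to proving it for every fixed deterministic $f_r(\cdot) := f(\cdot, r)$ and then averaging the resulting bound over $r$, which preserves the multiplicative factor $e^\epsilon$ on both sides of the differential privacy inequality.

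For the deterministic case, fix any neighboring $D \simeq D'$ and any $\mathcal{S} \subseteq \text{Range}(f_r \circ \mathcal{M}_1)$, and set $T_r := f_r^{-1}(\mathcal{S})$. The key observation is that the event $\{f_r(\mathcal{M}_1(D)) \in \mathcal{S}\}$ coincides exactly with the event $\{\mathcal{M}_1(D) \in T_r\}$, because $f_r$ is a fixed function. Applying the $\epsilon$-DP guarantee of $\mathcal{M}_1$ to the subset $T_r$ of its range immediately yields the desired $e^\epsilon$ multiplicative bound. Averaging this pointwise-in-$r$ inequality against the law of the seed then recovers the bound for $\mathcal{M}_2 \circ \mathcal{M}_1$.

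The main obstacle is conceptual rather than computational: one must argue that ``any algorithm'' $\mathcal{M}_2$ really is captured by the seed-plus-deterministic-map decomposition, and that the averaging over $r$ commutes correctly with the probability on both sides. The decomposition is standard, since any randomized algorithm can be simulated by drawing fresh randomness once and then acting deterministically, and the independence of $r$ from the data ensures that no additional privacy leakage is introduced by the averaging step. Notably, no further properties of $\mathcal{M}_2$, such as boundedness, continuity, or knowledge of $D$, are required, which reflects the ``free'' nature of post-processing asserted by Lemma~\ref{lem:post-processing}.
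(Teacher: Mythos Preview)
Your argument is correct and is in fact the standard proof of the post-processing property: reduce to a deterministic post-processor via a random-seed decomposition, pull back the output event through the preimage $T_r = f_r^{-1}(\mathcal{S})$, apply $\epsilon$-DP of $\mathcal{M}_1$ on $T_r$, and then average over the seed. No gaps.

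That said, there is nothing to compare against: the paper does not prove Lemma~\ref{lem:post-processing} at all. It is listed among the background composition properties in Section~\ref{sec:dp} and simply cited from \cite{dwork2014algorithmic}; the paper then invokes it as a black box in the proof of Theorem~\ref{theorem1}. Your write-up is essentially the proof given in that cited reference, so it is the ``right'' proof in the sense the authors are implicitly relying on, but it goes beyond what the paper itself supplies.
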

Lemma~\ref{lem:post-processing} shows the post-processing property of differential privacy, namely, if an algorithm takes as input the output of another algorithm that satisfies $\epsilon$-differential privacy, then the resulted algorithm still satisfies $\epsilon$-differential privacy.

In this paper, we use the Laplace mechanism \cite{Dwork2006Calibrating} to design algorithms. The Laplace mechanism preserves differential privacy by adding random noise satisfying the Laplace distribution to any query function $f$ (e.g., count query) over $D$. The magnitude of noise depends on the sensitivity of $f$, $\Delta f$, which represents the maximum deviation of the query result on any adjacent datasets. For instance, $\Delta f=1$ for count query. The idea of Laplace mechanism is to add smallest but sufficient noise to any query result of $f$, such that the query results of any neighboring datasets (only differing in an element) are indistinguishable, and thus the personal privacy (i.e., any single element information) cannot be inferred from query results. The \emph{Laplace mechanism} $\mathcal{M}_f$ is given in Definition~\ref{def:laplace}, and its differential privacy is guaranteed by Lemma~\ref{lem:laplace}

\begin{mydef}\label{def:laplace}
(Laplace Mechanism \cite{Dwork2006Calibrating}) The Laplace mechanism is defined as:
\begin{equation}
\mathcal{M}_L(D,f,\epsilon) = f(D) + (Y_1, Y_2,\cdots, Y_d)
\end{equation}
where $f(D)$ is a given query function $f(D):D \rightarrow R^d$ with sensitivity
\begin{equation}
\Delta f = \max_{(D,D^\prime):D \simeq D^\prime} \|f(D) - f(D^\prime)\|_1
\end{equation}
and $Y_i (1 \le i \le d)$ are i.i.d. random variables drawn from $Lap(\frac{\Delta f}{\epsilon})$.
\end{mydef}
The probability density function of Laplace distribution is as follows:
\begin{equation}
Lap(b) = Lap(x|b) = \frac{1}{2b}e^{-|x|/b}
\end{equation}
where for the Laplace mechanism, $b = \Delta f/\epsilon.$

\begin{lem}\cite{dwork2014algorithmic,Li2016Differential}\label{lem:laplace}
The Laplace mechanism preserves $\epsilon$-DP.
\end{lem}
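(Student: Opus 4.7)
The plan is to compare the output densities of $\mathcal{M}_L$ on a pair of neighboring datasets pointwise, bound their ratio by $e^\epsilon$ using the $L_1$ sensitivity, and then integrate to obtain the conclusion for arbitrary measurable output sets $\mathcal{S}$.

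First I would fix neighboring datasets $D \simeq D'$, let $b = \Delta f/\epsilon$, and write down the density of $\mathcal{M}_L(D,f,\epsilon)$ at an arbitrary point $z = (z_1,\dots,z_d) \in \mathbb{R}^d$. Since the noise coordinates $Y_i$ are i.i.d.\ $\mathrm{Lap}(b)$, this density factors as
\begin{equation}
p_D(z) \;=\; \prod_{i=1}^{d} \frac{1}{2b}\exp\!\left(-\frac{|z_i - f(D)_i|}{b}\right),
\end{equation}
and similarly for $p_{D'}(z)$ with $f(D')$ in place of $f(D)$.

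Next I would form the ratio $p_D(z)/p_{D'}(z)$. The $1/(2b)$ prefactors cancel, leaving
\begin{equation}
\frac{p_D(z)}{p_{D'}(z)} \;=\; \exp\!\left(\frac{1}{b}\sum_{i=1}^{d} \bigl(|z_i - f(D')_i| - |z_i - f(D)_i|\bigr)\right).
\end{equation}
Applying the reverse triangle inequality coordinatewise gives $|z_i - f(D')_i| - |z_i - f(D)_i| \le |f(D)_i - f(D')_i|$, so the sum in the exponent is bounded by $\|f(D)-f(D')\|_1$. By the definition of sensitivity, $\|f(D)-f(D')\|_1 \le \Delta f$, and substituting $b = \Delta f/\epsilon$ collapses the exponent to at most $\epsilon$, yielding the pointwise bound $p_D(z) \le e^\epsilon\, p_{D'}(z)$.

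Finally, I would lift this pointwise density inequality to arbitrary $\mathcal{S} \subseteq \mathrm{Range}(\mathcal{M}_L)$ by integrating: $\Pr[\mathcal{M}_L(D)\in\mathcal{S}] = \int_{\mathcal{S}} p_D(z)\,dz \le e^\epsilon \int_{\mathcal{S}} p_{D'}(z)\,dz = e^\epsilon\,\Pr[\mathcal{M}_L(D')\in\mathcal{S}]$, matching Eq.~\eqref{equ:dp}. I do not expect any serious obstacle here; the only delicate point is being careful that the sensitivity in Definition~\ref{def:laplace} is stated in the $L_1$ norm, which is exactly what the coordinatewise triangle-inequality step requires—had the sensitivity been defined in another norm, the cancellation would not line up with the product-of-Laplace densities and the bound would degrade.
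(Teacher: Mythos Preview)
Your proof is correct and is the standard argument. The paper itself does not supply a proof of Lemma~\ref{lem:laplace}; it simply cites it as a known result from \cite{dwork2014algorithmic,Li2016Differential}. That said, essentially the same density-ratio computation you carry out reappears later in the paper inside the proof of Theorem~\ref{theorem1}, where the authors bound $p(v)/p'(v)$ by $\exp(\epsilon_{iter})$ via the triangle inequality on the $L_1$ norm, so your approach matches both the cited sources and the paper's own use of the argument.
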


\subsection{K-means Clustering Algorithms}
The $k$-means clustering divides a set of data points into different subsets. In the clustering, there are typically two following steps.
\subsubsection{Initial Centroid Selection}
Given a value $a > 0$, paper \cite{su2016differentially} randomly selects $k$ initial centroids one by one. Each selection of a centroid must follow two principles: 1) the distance between any centroid and the boundary of the domain is at least $a$. 2) the distance between any two centroids is at least $2a$. If a randomly selected centroid does not satisfy the above two conditions, it is discarded and another centroid is reselected until $k$ initial centroids are obtained. When it fails to get $k$ centroids several times, it may be that the given value $a$ is too large, and then a smaller $a$ is tried again. The initial value of $a$ can be determined according to the domain size of data points. In our context, since the data is normalized into $[-1,1]$, the initial value of $a$ is set to 0.5. During the experiment, the optimal value of $a$ can be obtained by the binary search or it is simply set by experience. The above process of selecting the initial centroids only depends on the domain of data points rather than the data points themselves, so this process does not impact the privacy, and it can be applied directly in the privacy-preserving algorithms.

\subsubsection{K-means Clustering}
For a dataset $D = \{x_1, x_2, \cdots, x_N\}$, $x_i \in R^d$, the standard $k$-means clustering aims at partitioning data into $k$ disjoint subsets $(C_1^\ast, C_2^\ast, \cdots, C_k^\ast)$. The evaluation metric of clustering results called the Normalized Intra-Cluster Variance (NICV)\cite{su2016differentially} is as follows.
\begin{equation}
\frac{1}{N}\sum_{j=1}^k \sum_{x_i \in C_j^*} \|x_i - C_j\|_2 
\end{equation}
where $C_j$ is the centroid of the cluster $C_j^*$, and the smaller the NICV value, the better the clustering result.

Specifically, the algorithm selects $k$ data points as the initial centroids by initial centroids selection algorithm, then the quality of the centroids is improved iteratively until the centroids do not change. In each iteration, the algorithm traverses all the data points of the dataset and assigns the data points to the nearest cluster, then updates the centroid of each cluster.
\begin{equation}
C_j^t = \frac{\begin{matrix} \sum_{x_i \in C_j^*} x_i^t \end{matrix}}{|C_j^*|},\forall t \in \{1,..,d\} 
\end{equation}
where $C_j^t$ is the $t$-th dimension of $j$-th centroid and $x_i^t$ is the $t$-th dimension of $x_i$.

\subsection{Problem Statement}

We focus on the privacy problem for data analysis as follows. A trustable data holder (e.g., a government agency) collects personal records from a great number of users, analyzes these data with certain machine learning algorithms (e.g., in our context K-means clustering algorithm), and publishes the analytic results to dishonest third parties, which make use of these results for some statistical purpose. However, the privacy problem is that the dishonest third parties may also infer the personal privacy through exploiting the analytic results, beyond the legitimate use, if there is no appropriate privacy protection measures, as shown in Figure~\ref{fig:problem}. This results in the disclosure of individual privacy.

\begin{figure}[htb]

\centering
\includegraphics[width=0.8\textwidth]{./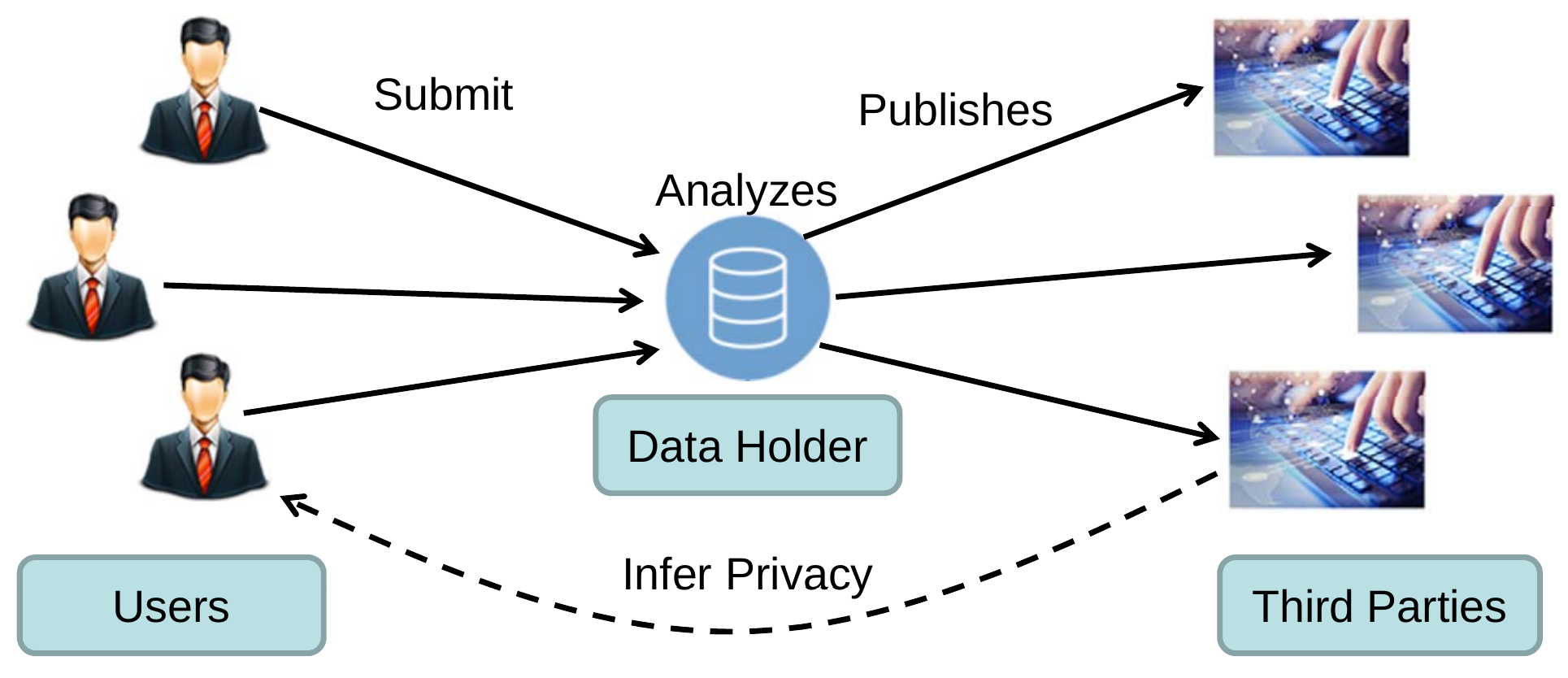}
\caption{Problem Model}\label{fig:problem}
\end{figure}

We use the notion of differential privacy to solve the privacy problem described above. As described in Section~\ref{sec:dp}, differential privacy ensures that the analytic results of any two neighboring datasets are similar by adding appropriate noise to the results. This means that the differentially private analytic result of a dataset remains roughly the same when any personal record opts into or out of the dataset. Conversely, the inference of any personal record (i.e., individual privacy) from the differentially private analytic results is thus hard.

In our context, we design a K-means clustering algorithm with $\epsilon$-differential privacy to protect individual privacy (i.e., any data point) from being inferred from the clustering results by dishonest third parties. A proper $\epsilon$ value can be chosen to ensure a certain level of privacy, e.g., $\epsilon=1$ (Note that the smaller $\epsilon$, the stronger the privacy). Furthermore, since $\epsilon$-differential privacy requires that noise be added to the clustering process, we aim to add as little noise as possible while preserving $\epsilon$-differential privacy, to improve the clustering utility.

\section{DP-KCCM Algorithm}\label{sec:dp-kccm}
In this section, we propose our differentially private $k$-means clustering algorithm (DP-KCCM) and prove its privacy in detail.

Considering previous works on $k$-means clustering with differential privacy, such as DPLloyd \cite{Blum2005Practical,McSherry2010Privacy} and DPLloyd-Impr \cite{su2016differentially}, they all added equal amounts of noise to the centroids of each iteration in the clustering process. Moreover, it is implicitly suggested that privacy budget should be divided equally across the iterations of the clustering \cite{su2016differentially}. It seems hard to improve the cluster utility merely through the privacy budget allocation. Our idea is that we may reduce the amounts of noise added by merging some noisy clusters and canceling the amounts of noise. Additionally, we may combine cluster merging with privacy budget allocation to further improve the utility.

Therefore, we are going into two questions: 1) Can we add noise adaptively in the process of iteration to improve the utility? 2) Can we merge adjacent clusters to reduce the noise added, and hence improve the utility? We describe these two aspects in detail.

\subsection{Idea 1}
In our differentially private $k$-means clustering algorithm, initial centroids are first selected using the initial centroids selection algorithm \cite{su2016differentially} (cf. Section 2.2.1), which ensures that the initial centroids are separated as much as possible. Then, these centroids are iteratively updated. Intuitively, in the first several iterations of the algorithm, the centroids change greatly, and we could inject relatively more noise. As the number of iterations increases, the changes of cluster centroids become less, and we could add a small amount of noise to ensure better clustering results. Since the noise volume is controlled by $\epsilon$, we introduce the partition of $\epsilon$ as follows.

According to the previous analysis, we know that as the number of iterations increases, the clustering centroids tends to be stable, and hence the noise added should become smaller, and $\epsilon$ should become bigger. Thus, our privacy budget allocation policy is to increase $\epsilon$ share gradually as the cluster centroids are updated iteratively.

We do a lot of experiments based on this allocation policy, and find that the increase of the $\epsilon$ share should be relatively slow. Otherwise, the first several $\epsilon$ shares would become too small that the cluster result deteriorates severely. Finally we settled on the following division. The clustering algorithm in this paper has carried out 12 iterations, and we set $\epsilon$ share for each iteration as follows: the values of $\epsilon$ shares for 1st to 4th iterations are $\frac{1}{24}\epsilon$, those for 5th to 8th iterations are $\frac{1}{12}\epsilon$, and those for 9th to 12th iterations are $\frac{1}{8}\epsilon$.

\subsection{Idea 2}\label{sec:idea2}
We know that the noise added to each centroid is random. Can we reduce the influence of noise on the centroid by merging adjacent clusters? For $k$-means clustering algorithm, we can divide dataset into $n \times k$ clusters. After clustering, we merge $n \times k$ clusters into $k$ ones. By merging multiple clusters, the noises added to clusters were empirically proved to cancel each other out.

We describe this idea with the example as shown in Figure~\ref{fig:example}. Suppose there are four points $A$, $B$, $C$ and $D$, which are probably clustered into the same category in a differentially private clustering. There are two ways to do this. The first way is to cluster these points into a category one shot without cluster merging, and get its noisy clustering centroid $C^{\star}_{ABCD}$. The second way is to first cluster the points into two clusters, with $A$, $B$ in one and $C$, $D$ in the other, compute their respective cluster centroids $C^{\star}_{AB}$ and $D^{\star}_{CD}$, and then merge the two clusters to get the final cluster centroid $C^{\star\star}_{ABCD}$. As long as the noises added to centroids $C^{\star}_{AB}$, $C^{\star}_{CD}$ and $C^{\star}_{ABCD}$ are roughly the same, namely the distances from the noiseless centroids $C_{AB}$, $C_{CD}$ and $C_{ABCD}$ to the noisy counterparts are approximately equal, the merged cluster centroid $C^{\star\star}_{ABCD}$ is probably less noisy the the centroid $C^{\star}_{ABCD}$, and thus this likely leads to a better clustering utility.

\begin{figure}[htb]

\centering
\includegraphics[width=0.6\textwidth]{./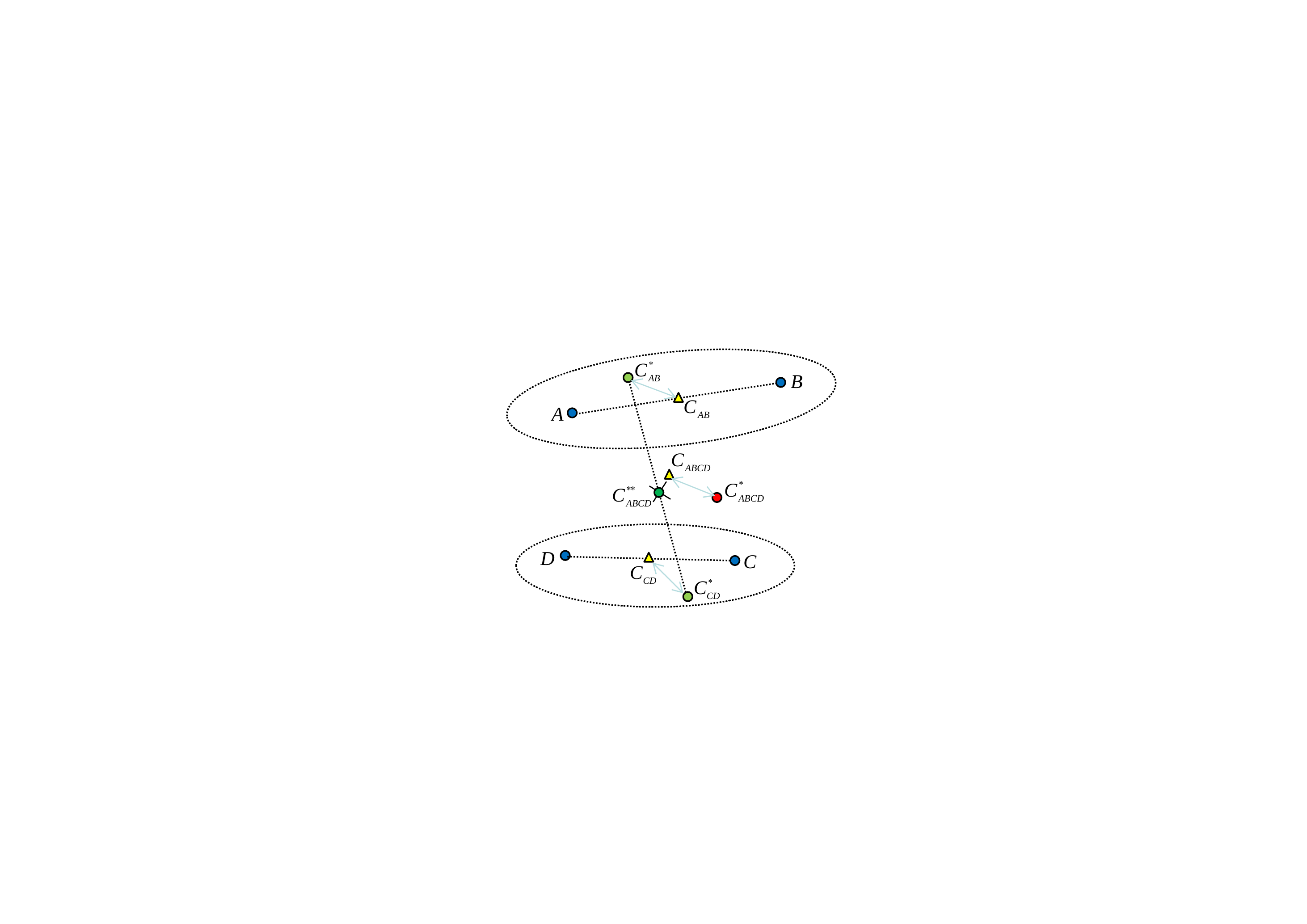}
\caption{An example for cluster merging.}\label{fig:example}
\end{figure}


\subsection{Algorithm DP-KCCM}
We combine idea 1 with idea 2 to design the algorithm DP-KCCM, which can be described with the following steps.
\begin{enumerate}[(1)]
\item Obtain $n \times k$ initial centroids by initial centroids selection algorithm.
\item Divide data points into $n \times k$ clusters.
\item Recalculate the centroids.
\item Add adaptive noise according to the number of iterations (cf. Section 3.1).
\item Repeat Steps (2), (3) and (4) until the maximum number of iterations is reached.
\item Merge $n \times k$ clusters into $k$ clusters.
\end{enumerate}
The formal description of the algorithm is shown in Algorithm~\ref{alg:dp-kccm}. We also detail the key steps in the following.

\renewcommand{\algorithmicrequire}{\textbf{Input:}}  
\renewcommand{\algorithmicensure}{\textbf{Output:}}  
\begin{algorithm}[!htb]
    \caption{DP-KCCM}\label{alg:dp-kccm}
    \label{alg:1}
    \begin{algorithmic}
    \REQUIRE {dataset $D$, cluster number $k$, clustering number $max\_clustering = 12$, global sensitivity $\Delta = d \cdot r + 1$, privacy budget $\epsilon = \sum_{i=1}^{max\_clustering} \epsilon_i$.}
    \ENSURE {The $k$ centroids.}
    \STATE initialize $C = n \times k$ centroids by initial centroid selection\

    \FOR{$iter \leftarrow 1$ to $max\_clustering$}
        \STATE Get $n \times k$ clusters through the standard $k$-means algorithm
        \STATE Recalculate the centroid of each cluster
        \FOR{$j \leftarrow 1$ to $n \times k$}
            \FOR{$i \leftarrow 1$ to $d$}
                \STATE $sum^{\prime}(C^*_j)[i] = sum(C^*_j)[i] + Lap(\frac{\Delta}{\epsilon_{iter}})$
            \ENDFOR
            \STATE $num^\prime(C^*_j) = num(C^*_j) + Lap(\frac{\Delta}{\epsilon_{iter}})$
            \STATE $c_j = \frac{sum^\prime(C^*_j)}{num^\prime(C^*_j)}$
        \ENDFOR
    \ENDFOR

    \STATE $C\_num = |C|$
    \WHILE{$C\_num > k$}
        \STATE Find the two nearest cluster $C^*_p$, $C^*_q$ and combine them into cluster $C^*_o$:
        \FOR{$i \leftarrow 1$ to $d$}
            \STATE $c^i_o = \min(c^i_p,c^i_q) + |c^i_p - c^i_q| \cdot \frac{num'(C^*_{m})}{num'(C^*_p) + num'(C^*_q)}$
        \ENDFOR
        \STATE $C\_num = C\_num -1$
    \ENDWHILE

    \STATE Get $k$ centroids

    \end{algorithmic}
\end{algorithm}

\subsubsection{Divide Data Points into $n \times k$ Clusters}
For a data point, we set two variables $min\_dist$ and $cent$. The former represents the distance from this data point to its centroid, and the latter is the index of the cluster (i.e., $cent \in [1, n \times k]$) containing the data point. Firstly, we traverse all the data points in the dataset and calculate the distances of each data point to all centroids, then each data point is assigned to the cluster determined by the corresponding minimum distance. Furthermore, we save the minimum distance value of each data point into $min\_dist$ and record the index of the corresponding centroid into $cent$. The distance from $x_i$ to the $j$-th centroid $C_j$ is computed as follow:
\begin{equation}
dist(x_i,C_j) = \|x_i - C_j\|_2. 
\end{equation}
where each dimension of data points $x_i$ is normalized to $[-r, r]$, and we choose $r=1$ in our context for simplicity.

\subsubsection{Add Noise to Centroids}
When all data points are divided, we obtain $n \times k$ clusters. Then we recalculate the centroid of each cluster to get $n \times k$ new centroids. The calculation of the centroid is as follows:
\begin{equation}
C_j = \frac{sum(C_j^*)}{num(C_j^*)}, \forall j \in \{1,\cdots,n \times k\} 
\end{equation}
where $sum(C_j^*) = \begin{matrix} \sum_{x_i \in C_j^*} x_i \end{matrix}, num(C_j^*) = |C_j^*|$.

In order to protect the information of data points, we need to add a certain amount of Laplace noise to both $d$-dimension sum of data points and the number of data points during the above calculation. And then we get $n \times k$ noisy centroids after noise addition.
The function for calculating noisy centroid $j$ is
\begin{equation}
c_j = \frac{sum'(C_j^*)}{num'(C_j^*)}. 
\end{equation}
with
\begin{equation}
sum^\prime(C^*_j) = sum(C^*_j) + (Y_1,\cdots,Y_d) 
\end{equation}
\begin{equation}
num^\prime(C^*_j) = num(C^*_j) + Y_{d+1} 
\end{equation}
where $Y_i$ (for $1 \le i \le d+1$) are i.i.d. random variables drawn from $Lap(\Delta/\epsilon_{iter})$, $\Delta = d \cdot r + 1$, $r$ is the maximum absolute value of each dimension, and $\epsilon_{iter}$ is the privacy budget for the current iteration $iter$.

Note that the global sensitivity $\Delta$ can be computed as follows. For each iteration, each data point is involved in answering $d$ sum queries and one count query. Moreover, each dimension of data points is
normalized to $[-r, r]$. Thus, the global sensitivity $\Delta = d \cdot r + 1$.

\subsubsection{Merge $n \times k$ Clusters into $k$ Clusters}
After the number of iterations reaches the specified value, we can obtain $n \times k$ clusters. And then we merge the two nearest clusters iteratively. We merge two clusters using their noisy centroids and noisy cluster sizes. Note that, to achieve differential privacy, cluster sizes are needed to add Laplace noise, and the noisy sizes does not represent real counts of elements in clusters, but are only used for computing merged centroids. After merging of cluster $C^*_p$ and cluster $C^*_q$, the $i$-th dimension of the centroid of the new cluster $C^*_o$ is
\begin{equation}
c^i_o = \min(c^i_p,c^i_q) + |c^i_p - c^i_q| \cdot \frac{num'(C^*_{m})}{num'(C^*_p) + num'(C^*_q)} 
\end{equation}
where $m = \arg \max(c^i_p,c^i_q)$, namely, $m = p$ if $c^i_p \ge c^i_q$, and $m = q$ otherwise.

\subsection{Privacy Analysis}
Theorem~\ref{theorem1} states that the DP-KCCM algorithm achieves $\epsilon$-differential privacy. We mainly prove the theorem below.
\newtheorem{theorem}{Theorem}
\begin{theorem}\label{theorem1}
The DP-KCCM algorithm preserves $\epsilon$-differential privacy.
\end{theorem}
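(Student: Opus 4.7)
The plan is to view DP-KCCM as a sequential composition of twelve per-iteration mechanisms followed by a data-independent post-processing step, and to show that each per-iteration mechanism is $\epsilon_{iter}$-DP on its own. The initial centroid selection is based only on the domain $[-r,r]^d$ and not on the data, so it incurs no privacy cost. Similarly, the merging \textbf{while}-loop at the end takes as input only the noisy centroids $c_j$ and the noisy sizes $num^\prime(C^*_j)$ already released by the iterative phase; by Lemma~\ref{lem:post-processing} it cannot degrade the privacy guarantee, so the argument can focus entirely on the main \textbf{for}-loop.

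For a single iteration $iter$, I would first fix the centroids produced at the end of the previous iteration. Conditional on those centroids, the assignment of each data point to its nearest centroid is deterministic, and the $n \times k$ clusters form a disjoint partition of $D$. Within one cluster $C^*_j$, the algorithm releases the $(d{+}1)$-dimensional vector $\bigl(sum(C^*_j)[1],\dots,sum(C^*_j)[d],\, num(C^*_j)\bigr)$ with independent Laplace noise of scale $\Delta/\epsilon_{iter}$ added to each coordinate. The key routine computation is the $L_1$-sensitivity of this vector-valued query: adding or removing a single point $x \in [-r,r]^d$ changes the sum by at most $\sum_{i=1}^d |x^i| \le d\cdot r$ in $L_1$ and changes the count by exactly $1$, for a total of at most $d\cdot r + 1 = \Delta$, matching the scale in the algorithm. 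Definition~\ref{def:laplace} together with Lemma~\ref{lem:laplace} then yields $\epsilon_{iter}$-DP for the release from cluster $C^*_j$. Because the other $n\cdot k-1$ clusters are disjoint from $C^*_j$ and the noise is drawn independently, Lemma~\ref{lem:parallel} (parallel composition) lifts this to $\epsilon_{iter}$-DP for the entire iteration.

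Finally I would compose the twelve iterations via Lemma~\ref{lem:sequential} to obtain $\sum_{iter=1}^{12}\epsilon_{iter} = \epsilon$-DP, and then apply Lemma~\ref{lem:post-processing} once more to absorb the merging loop, completing the proof. The step I expect to be the main obstacle is justifying the parallel-composition invocation cleanly: the partition used in iteration $iter$ depends on the noisy centroids from iteration $iter-1$, which are themselves random functions of the data, whereas Lemma~\ref{lem:parallel} is stated for a fixed disjoint partition. I would handle this by an adaptive-composition style argument, conditioning on the output of iteration $iter-1$: given that output the partition becomes a deterministic function of $D$, so parallel composition applies on each branch of the conditioning with the same parameter $\epsilon_{iter}$, and Lemma~\ref{lem:sequential} then accommodates the dependence of iteration $iter$ on the output of iteration $iter-1$. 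Once this subtlety is spelled out, the remainder of the theorem follows from the sensitivity calculation above.
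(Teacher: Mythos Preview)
Your proposal is correct and follows essentially the same route as the paper: data-independent initialization, per-iteration Laplace mechanism with sensitivity $\Delta = d\cdot r + 1$, parallel composition over the $n\times k$ disjoint clusters, sequential composition over the twelve iterations, and post-processing for both the centroid division and the final merging loop. The only notable differences are that the paper writes out the Laplace density ratio explicitly rather than invoking Lemma~\ref{lem:laplace}, and that you are more careful than the paper in flagging and handling the adaptive-partition subtlety (the paper simply fixes the clusters and appeals to sequential composition without spelling out the conditioning).
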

\begin{proof}
We prove the theorem in the following three parts.

First, the initial centroid selection is independent of the data, so the privacy is not impacted in this step.

Second, we show that the algorithm achieves $\epsilon_{iter}$-differential privacy for iteration $iter$, and achieves $\epsilon$-differential privacy for all iterations due to the sequential composition property (cf. Lemma~\ref{lem:sequential}).

For iteration $iter$, each data point is involved in $d$ sum queries and one count query. Conversely, each iteration queries the function $f : D^d \to D^d \times N$ over each cluster, with the global sensitivity $\Delta f = \Delta = d \cdot r + 1$. Specifically, let $D$ and $D'=D-\{x\}$, for any point $x \in D$, be neighboring datasets. Let $D$ be divided into disjoint clusters $C_1^*, C_2^*, \cdots, C_{n \cdot k}^*$. Then, the neighboring dataset $D^\prime$ is correspondingly divided into disjoint clusters ${C_1^*}^\prime, {C_2^*}^\prime, \cdots, {C_{n \cdot k}^*}^\prime$, satisfying that there exists an $J$, such that ${C_J^*}^\prime = {C_J^*}-\{x\}$, and ${C_j^*}^\prime = C_j^*$ for $j \neq J$. Therefore, each iteration can be regarded as the parallel composition of mechanisms querying function $f(.)$ over $n \times k$ disjoint clusters (cf. Lemma~\ref{lem:parallel}), and differential privacy achievement is determined by the mechanism over cluster $C_J^*$ (since mechanisms over other clusters $C_j^*$ achieve $0$-differential privacy for $C_j^*={C_j^*}^\prime$).

Let $p(.)$ and $p'(.)$ denote the probability density functions of the mechanism over cluster $C_J^*$. For any point $v \in D^d \times N$, the probability density ratio between the cases of $C_J^*$ and ${C_J^*}^\prime$ are as follows.
\begin{align*}\small
\frac{p(v)}{p^\prime(v)} &= \frac{exp(-\frac{\epsilon_{iter}||f(C_J^*)-v||_1}{\Delta f})}{exp(-\frac{\epsilon_{iter}||f({C_J^*}^\prime)-v||_1}{\Delta f})}\\
                      &= exp(\frac{\epsilon_{iter}(||f({C_J^*}^\prime)-v||_1-||f(C_J^*)-v||_1)}{\Delta f})\\
                      &\leq exp(\frac{\epsilon_{iter} \cdot ||f(C_J^*)-f({C_J^*}^\prime)||_1}{\Delta f})\\
                      &\leq exp(\epsilon_{iter})
\end{align*}

Symmetrically, we have $\frac{p(v)}{p^\prime(v)} \ge exp(-\epsilon_{iter})$. The mechanism over $C_J^*$ achieves $\epsilon_{iter}$-differential privacy, and thus the iteration mechanism $\mathcal{DP-ITER}$ achieves $\epsilon_{iter}$-differential privacy according to the parallel composition property (cf. Lemma~\ref{lem:parallel}).

Since the computation of noisy centroids $c_j$ is a post-process of iteration mechanism $\mathcal{DP-ITER}$, iteration $iter$ achieves $\epsilon_{iter}$-differential privacy due to the post-processing property (cf. Lemma~\ref{lem:post-processing}).

For all iterations, the sequential composition property (cf. Lemma~\ref{lem:sequential}) is applied, and the resulted mechanism satisfies the $\epsilon$-differential privacy, where $\epsilon = \epsilon_1 + \epsilon_2 + \cdots + \epsilon_{max\_clustering}$.

Finally, merging $n \times k$ clusters into $k$ ones involves only the noisy cluster centroids and noisy cluster sizes, and it is actually a post-process of the composition of iterations, impacting nothing on the differential privacy achieved.

Therefore, we conclude that the DP-KCCM algorithm preserves $\epsilon$-differential privacy.
\end{proof}

\section{Performance Evaluation}\label{sec:evaluation}

\subsection{Methodology}

We implement the proposed differentially private $k$-means clustering algorithm, DP-KCCM, and do experiments to evaluate the algorithm based on six datasets. The detailed description of these six datasets is as Table~\ref{tab:datasets}. The data attributes contained in these six datasets are all of numerical type. We normalize the domain of each attribute to the range of $[-1, 1]$.

\begin{table*}[!htb]
\caption{\small Description of the Datasets.}\label{tab:datasets}
\begin{center}\small
\begin{tabular}	{|p{2cm}<{\centering}|c|c|c|p{5cm}<{\centering}|}
	\hline
	Dataset & tuples & dims & cluster & description \\ \hline
    \hline
	Blood & 748 & 5 & 4 & \multicolumn{1}{|m{5cm}|}{The dataset records individual blood donations, and is taken from the Blood Transfusion Service Center.} \\
    \hline
    Adult &  32561 & 6 & 5 & \multicolumn{1}{|m{5cm}|}{This is a census dataset that records personal information.} \\
    \hline
    Tripadvisor-review & 980 & 10 & 4 & \multicolumn{1}{|m{5cm}|}{The dataset is the reviews on destinations in 10 categories mentioned across East Asia.} \\
    \hline
    Electrical & 10000 & 13 & 5 & \multicolumn{1}{|m{5cm}|}{The dataset is the simulated data for the local stability analysis of the 4-node star system (electricity producer is in the center) implementing Decentral Smart Grid Control concept.} \\
    \hline
    Review-ratings & 5454 & 24 & 4 & \multicolumn{1}{|m{5cm}|}{The dataset contains google reviews on attractions from 24 categories across Europe.} \\
    \hline
    Credit-card & 30000 & 24 & 5 & \multicolumn{1}{|m{5cm}|}{The dataset contains customer default payments in Taiwan.} \\
    \hline
\end{tabular}
    \\
    \small All datasets are downloaded from website http://archive.ics.uci.edu/ml/datasets.php

\end{center}
\end{table*}

We mainly focus on comparisons of algorithm performances from the following two aspects:
\begin{itemize}
\item Comparing the effect of different algorithms with a fix $k$ value under different $\epsilon$ values.
\item Comparing the effect of different algorithms with a fix $\epsilon$ value under different $k$ values.
\end{itemize}

We compare the following four differentially private $k$-means clustering algorithms.
\begin{itemize}
\item $average\_k$ : $k$ initial centroids are generated by the initial centroid selection algorithm, and the average noise is added to all centroids during each iteration.
\item $allocation\_k$ : $k$ initial centroids are generated by the initial centroid selection algorithm, and adaptive noise is added to all centroids during each iteration.
\item $average\_nk$ : $n \times k$ initial centroids are generated by the initial centroid selection algorithm, and average noise is added to all centroids in the process of each iteration. When the clustering is stable, $n \times k$ clusters are combined into $k$ clusters.
\item $allocation\_nk$ : $n \times k$ initial centroids are generated by the initial centroid selection algorithm, and adaptive noise is added to all centroids during each iteration. When the clustering is stable, $n \times k$ clusters are combined into $k$ clusters.
\end{itemize}

The $average\_k$ algorithm is in fact the state-of-the-art algorithm, DPLloyd-Impr, which is reported to be the best one overall among several existing differentially private $k$-means clustering algorithms \cite{su2016differentially}. Thus, in our experiments, we use it a benchmark algorithm for the performance comparisons. Then, on the basis of $average\_k$ algorithm, the two ideas are introduced separately, generating algorithms $allocation\_k$ and $average\_nk$, respectively. Finally, the two ideas are combined into the algorithm $allocation\_nk$, which is the proposed DP-KCCM algorithm. We compare these four algorithms in the experiments to demonstrate the effectiveness of the two ideas.

The above algorithms all output $k$ centroids $\boldsymbol{C} = \{C_1, C_2, \cdots, C_k\}$. The quality of clustering are assessed by Normalized Intra-Cluster Variance (NICV). For all algorithms, we apply the initial centroid selection algorithm to get the initial centroids. In our experiments, we first adopt the initial centroid selection algorithm to generate 20 sets of initial centroids. Then, we run 50 times on each set of initial centroids. So we take the average of NICV in 1000 experiments. Through a lot of experiments, we make the following settings for some parameters in the process of cluster with each dataset. We found that the clustering tends to be stable after the number of iterations reaches 10, so we set $max\_clustering=12$. The experimental results obtained by merging $n \times k$ clustering into $k$ clusters when $n$ is set to 3 are relatively ideal.

Note that in this paper, we measure the clustering utility with NICV through extensive
experiments. The underlying reason is that NICV is the objective function. NICV is directly effected by the differentially private clustering. To some extent, we can regard NICV value as the utility of the differentially private algorithm, and NICV can reflect straightforwardly how the noise addition for achieving differential privacy impacts the clustering result. By using multiple datasets and averaging a great
number of independently random runs, we expect NICV to measure the clustering
utility reasonably. In the future, other reasonable measures can be investigated
for fully evaluating the clustering utility.

\subsection{Experimental Results.}

We now show two groups of experimental results, and make corresponding discussions.

\textbf{(1) Performance Comparison in term of $\epsilon$}

Figures \ref{fig:Blood-e} to \ref{fig:Credit-e} show the influence of different $\epsilon$ values on the clustering results for different datasets. The $k$ values are fixed at $4$ or $5$.
The six figures can be divided into three groups: Figures \ref{fig:Blood-e} and \ref{fig:Adult-e}, Figures \ref{fig:Tripadvisor-e} and \ref{fig:Electrical-e}, Figures \ref{fig:Travel-e} and \ref{fig:Credit-e}. The dataset dimensionality values for the first group are in the range $[1..10]$, those of the second group are in the range $[11..20]$, and those of the third group are in the range $[21..30]$.

From these 6 figures, we make the following observations. 1) For all cases, $allocation\_k$ performs worse than $average\_k$, which illustrates that the average allocation of privacy budget seems to be the best choice for differentially private $k$-means clustering algorithms without cluster merging. 2) Nearly for all cases, $average\_nk$ performs better than $average\_k$, which indicates that cluster merging indeed improves the clustering utility. 3) For all cases, $allocation\_nk$ performs significantly better than $average\_nk$, demonstrating the surprising result that combining both cluster merging and adaptive privacy budget allocation is able to further improve the clustering utility. 4) The performances of all algorithms become better in a similar way as the $\epsilon$ value increases, which shows the stability of the improvement over clustering utility. 5) It seems that the dimensionality values have no obvious impact on the utility improvement, and some datasets seems have more effect than others of utility improvement maybe due to their data characteristics.

\begin{figure*}[!htb]
\begin{minipage}[b]{0.5\linewidth}
\centering
\includegraphics[width=1\textwidth]{./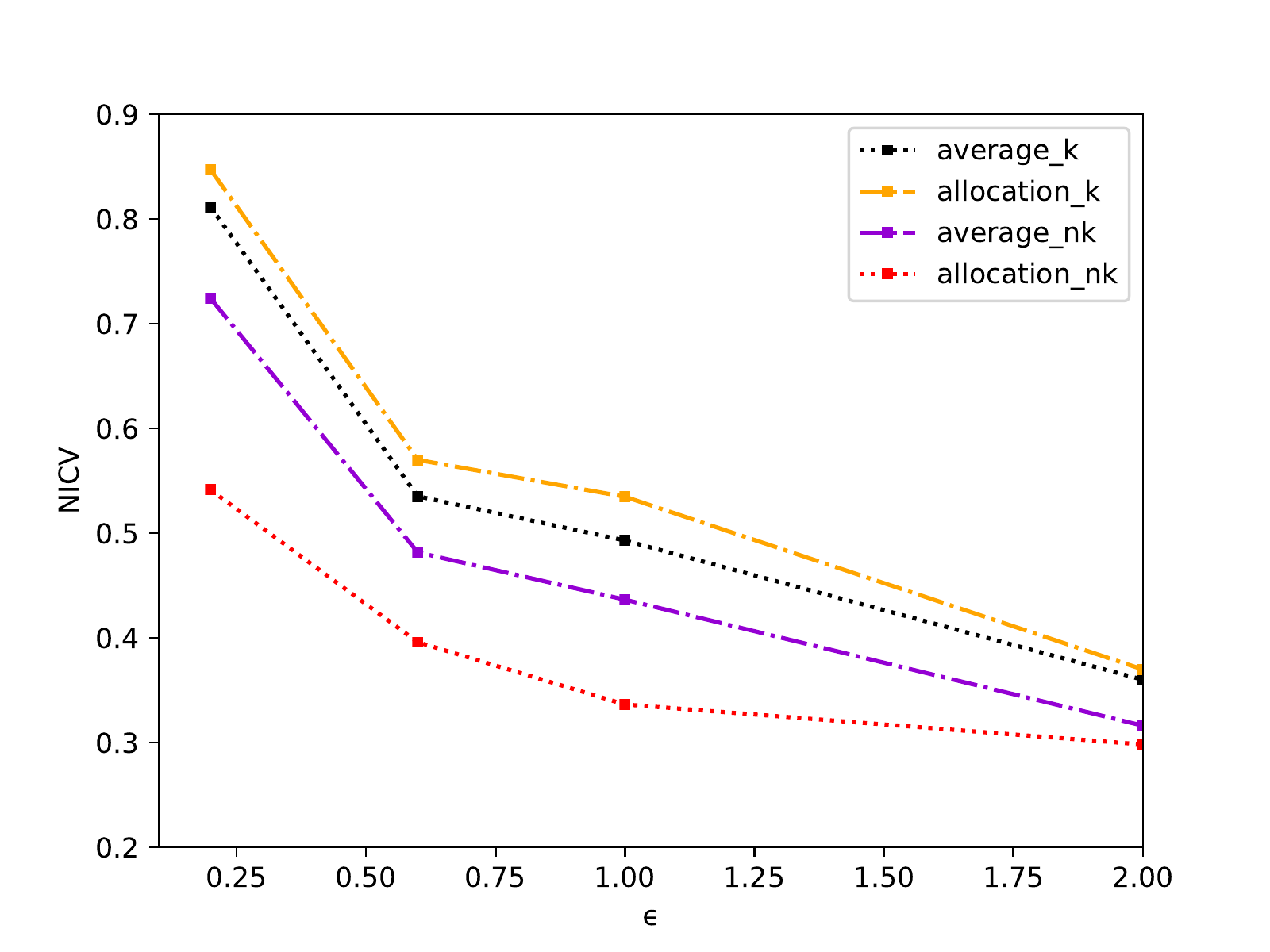}
\caption{Blood ($k=4$)}\label{fig:Blood-e}
\end{minipage}
\begin{minipage}[b]{0.5\linewidth}
\centering
\includegraphics[width=1\textwidth]{./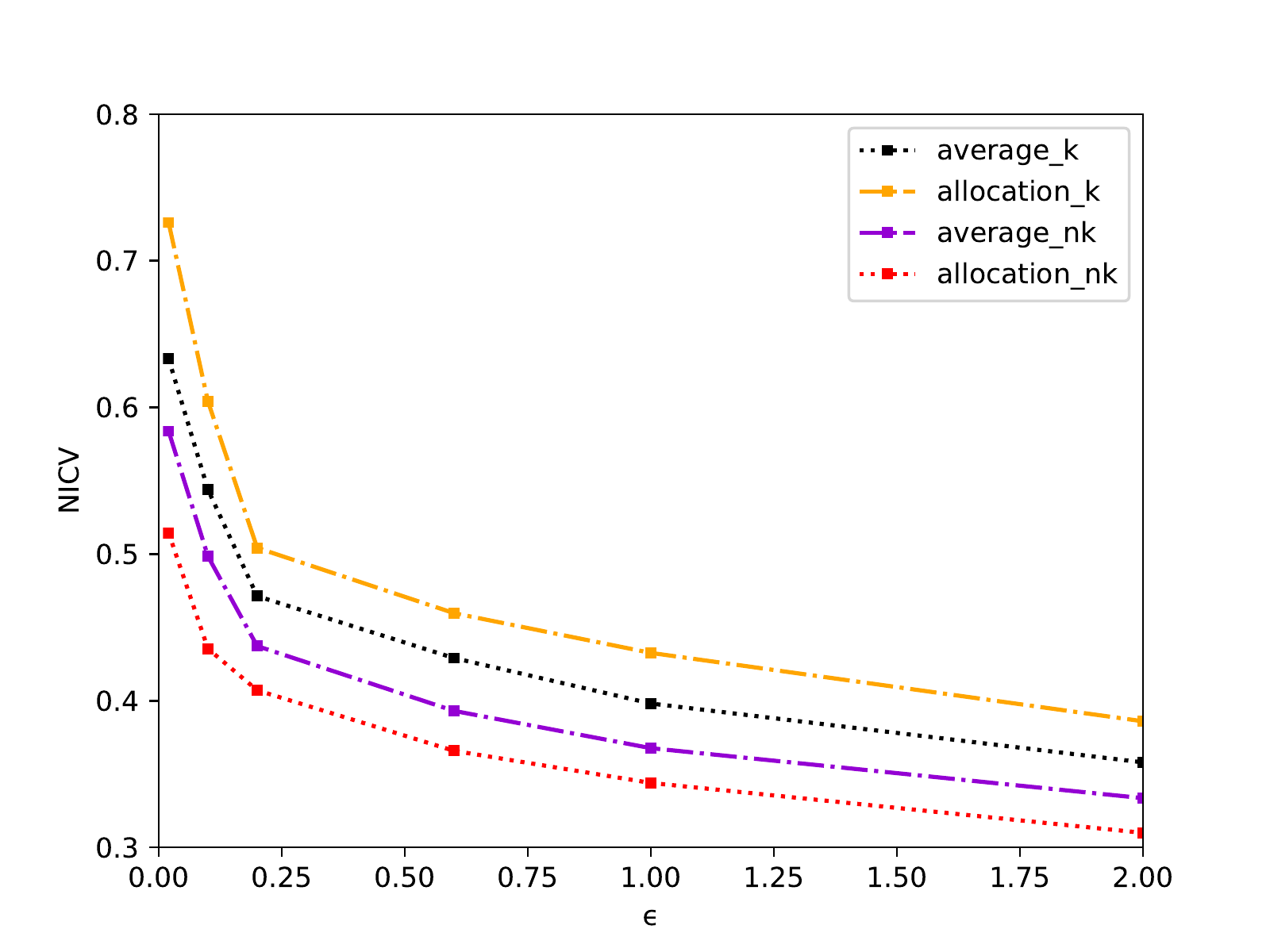}
\caption{Tripadvisor-review ($k=4$)}\label{fig:Tripadvisor-e}
\end{minipage}

\begin{minipage}[b]{0.5\linewidth}
\centering
\includegraphics[width=1\textwidth]{./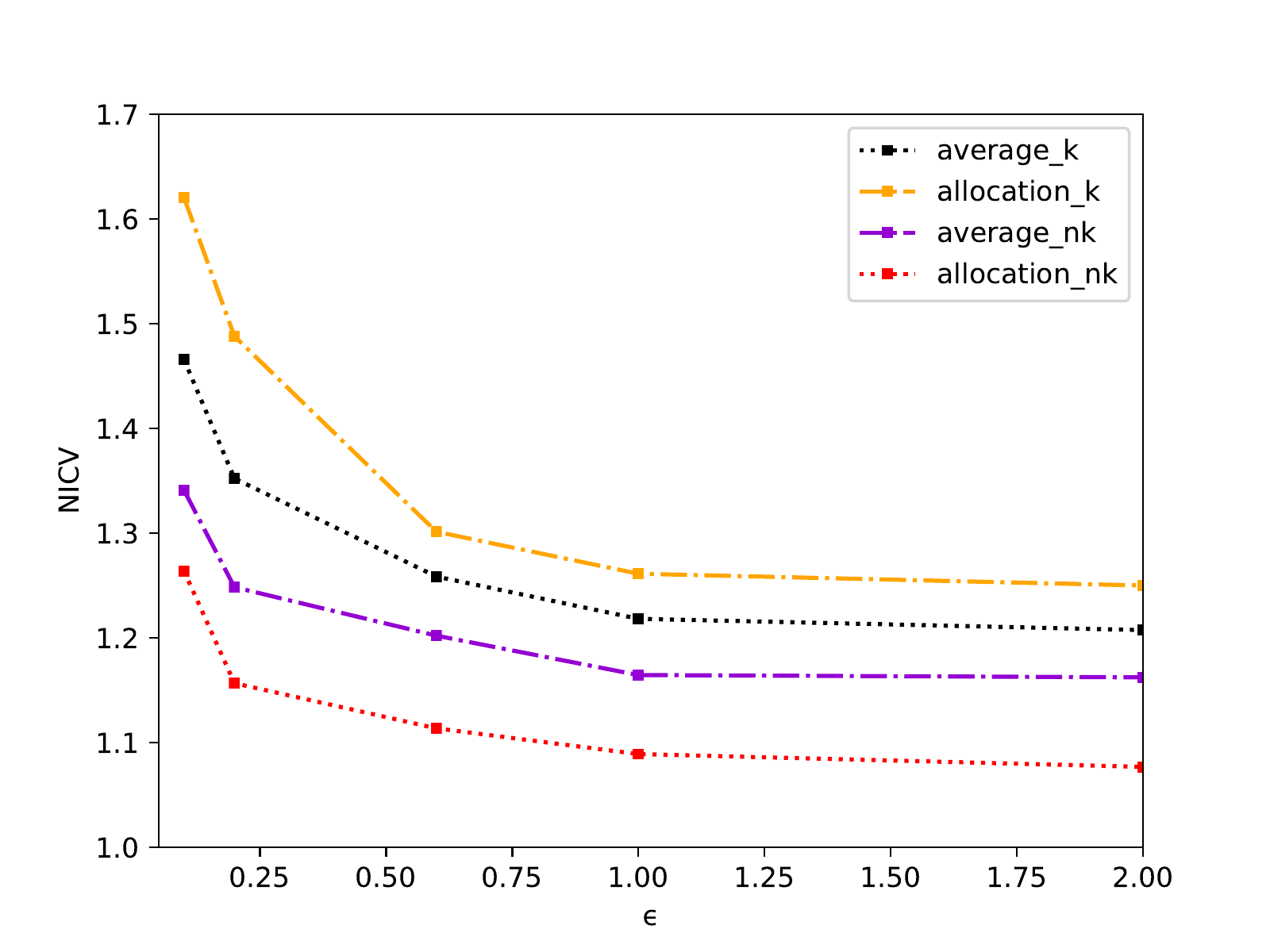}
\caption{Travel-review ($k=4$)}\label{fig:Travel-e}
\end{minipage}
\begin{minipage}[b]{0.5\linewidth}
\centering
\includegraphics[width=1\textwidth]{./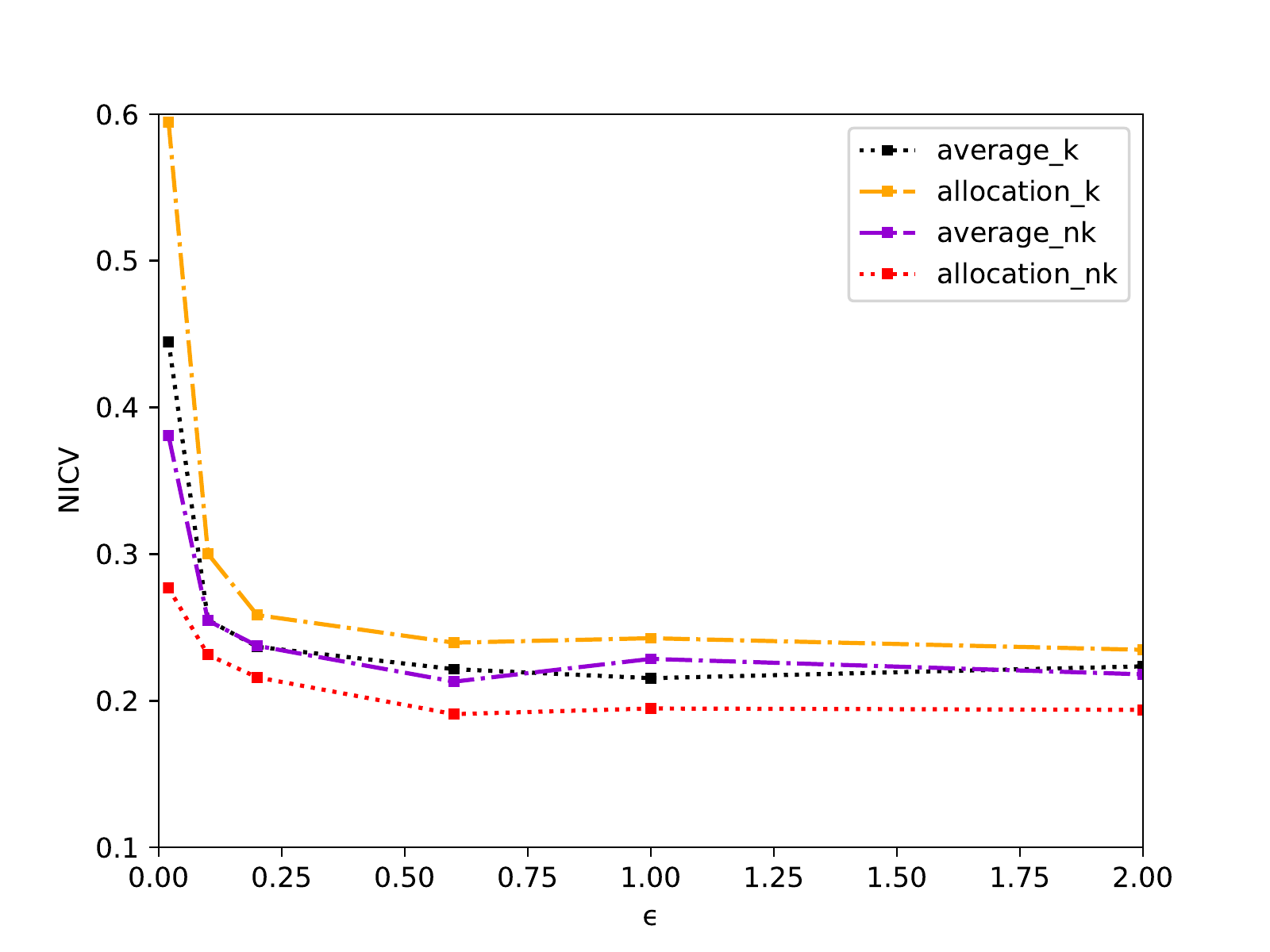}
\caption{Adult ($k=5$)}\label{fig:Adult-e}
\end{minipage}

\begin{minipage}[b]{0.5\linewidth}
\centering
\includegraphics[width=1\textwidth]{./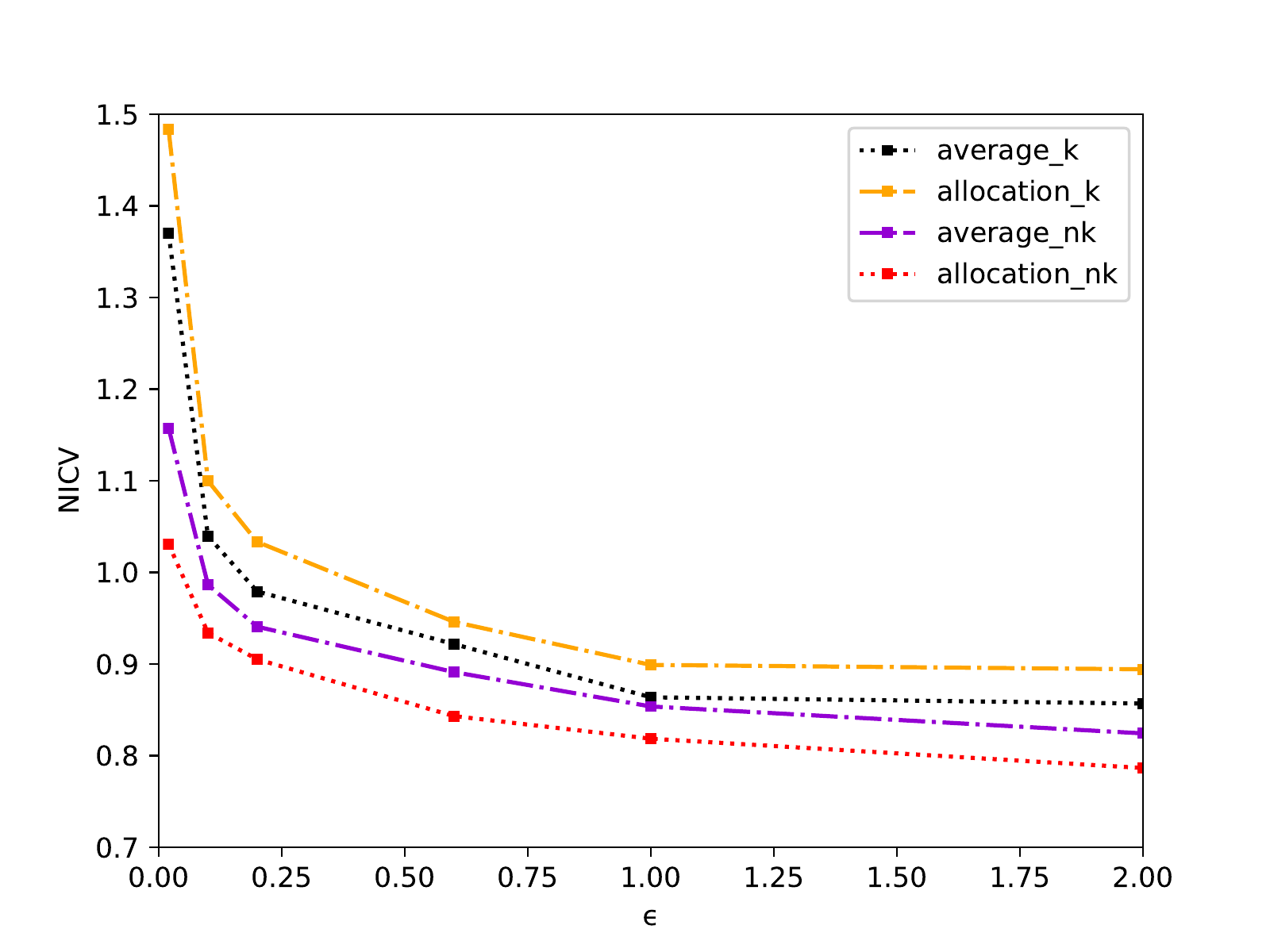}
\caption{Electrical ($k=5$)}\label{fig:Electrical-e}
\end{minipage}
\begin{minipage}[b]{0.5\linewidth}
\centering
\includegraphics[width=1\textwidth]{./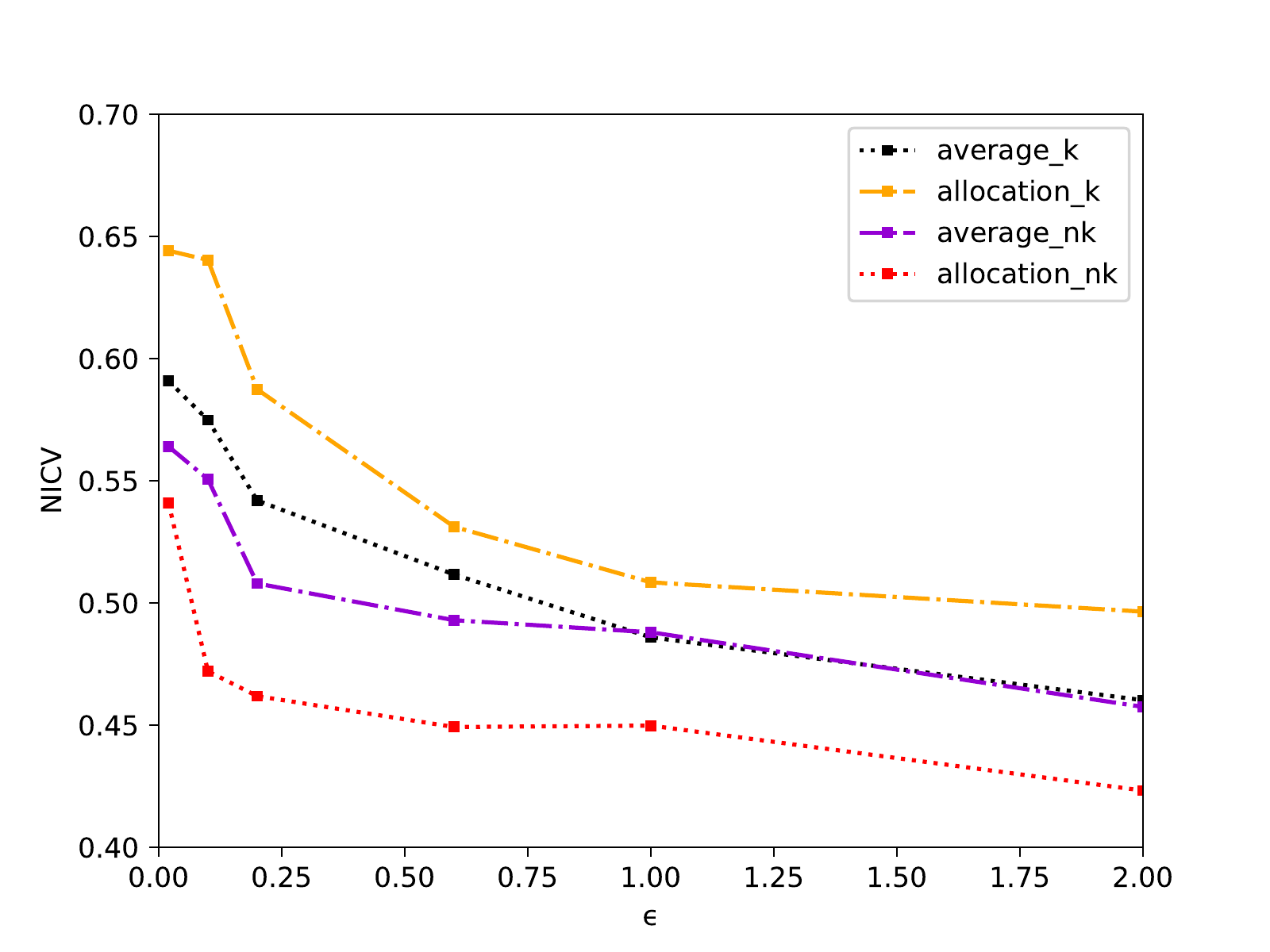}
\caption{Credit-card ($k=5$)}\label{fig:Credit-e}
\end{minipage}

\end{figure*}

\textbf{(2) Performance Comparison in term of $k$}

Figures \ref{fig:Blood-k} to \ref{fig:Credit-k} show the influence of different $k$ values on the clustering results for different datasets when the $\epsilon$ value is fixed. From these figures, we make observations as follows. 1) The result of performance comparisons between these four algorithms is similar to that of Figures \ref{fig:Blood-e} to \ref{fig:Credit-e}, and the ranking of the algorithms in decreasing order of the performance is $allocation\_nk$, $average\_nk$, $average\_k$, and $allocation\_k$. 2) For each figure, there is a point of certain $k$ value that has the most effect in utility improvement, which seems to show that this $k$ is the most appropriate number of clusters for the corresponding dataset.

\begin{figure*}[!htb]

\begin{minipage}[b]{0.5\linewidth}
\centering
\includegraphics[width=1\textwidth]{./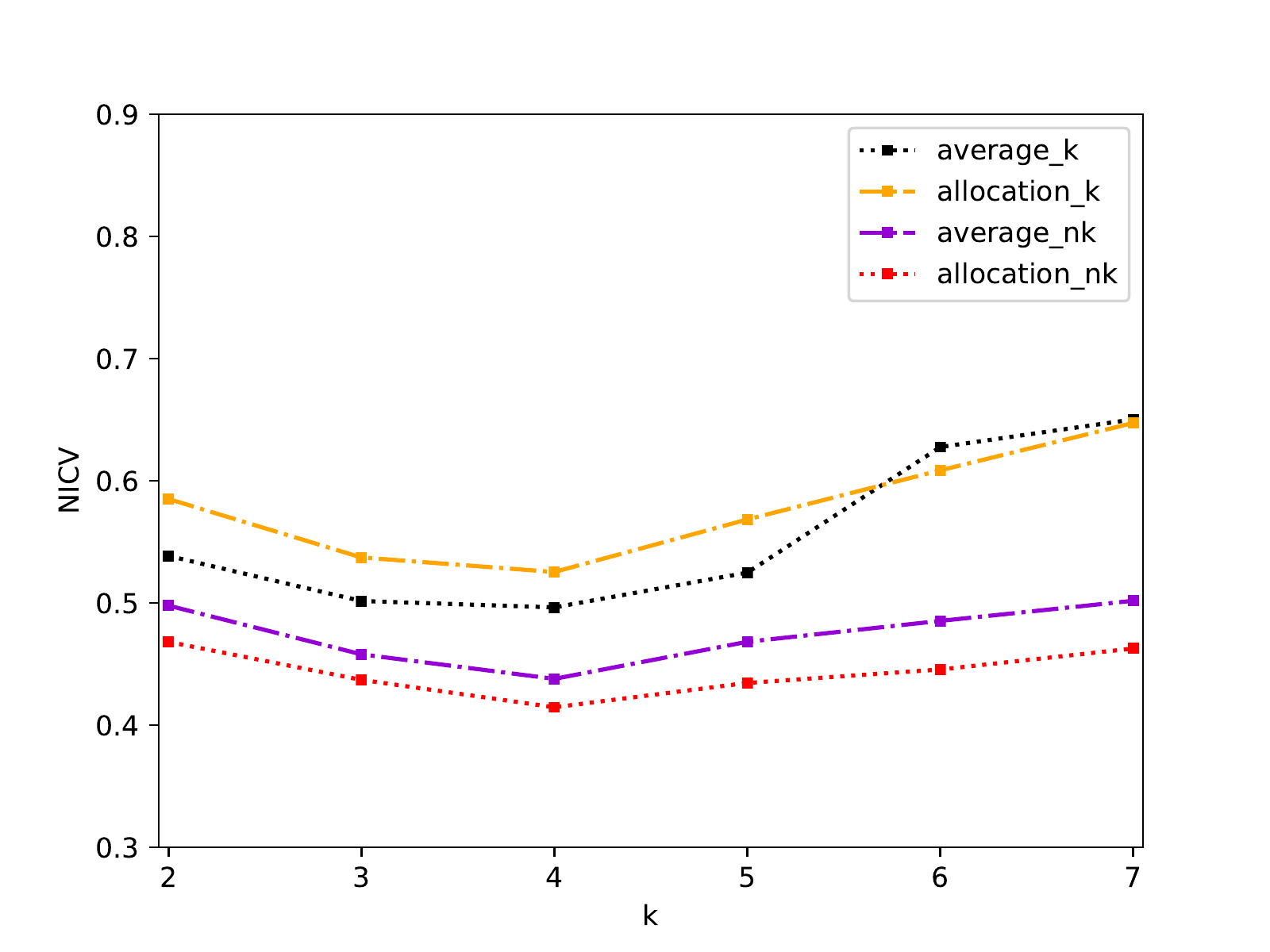}
\caption{Blood ($\epsilon=0.6$)}\label{fig:Blood-k}
\end{minipage}
\begin{minipage}[b]{0.5\linewidth}
\centering
\includegraphics[width=1\textwidth]{./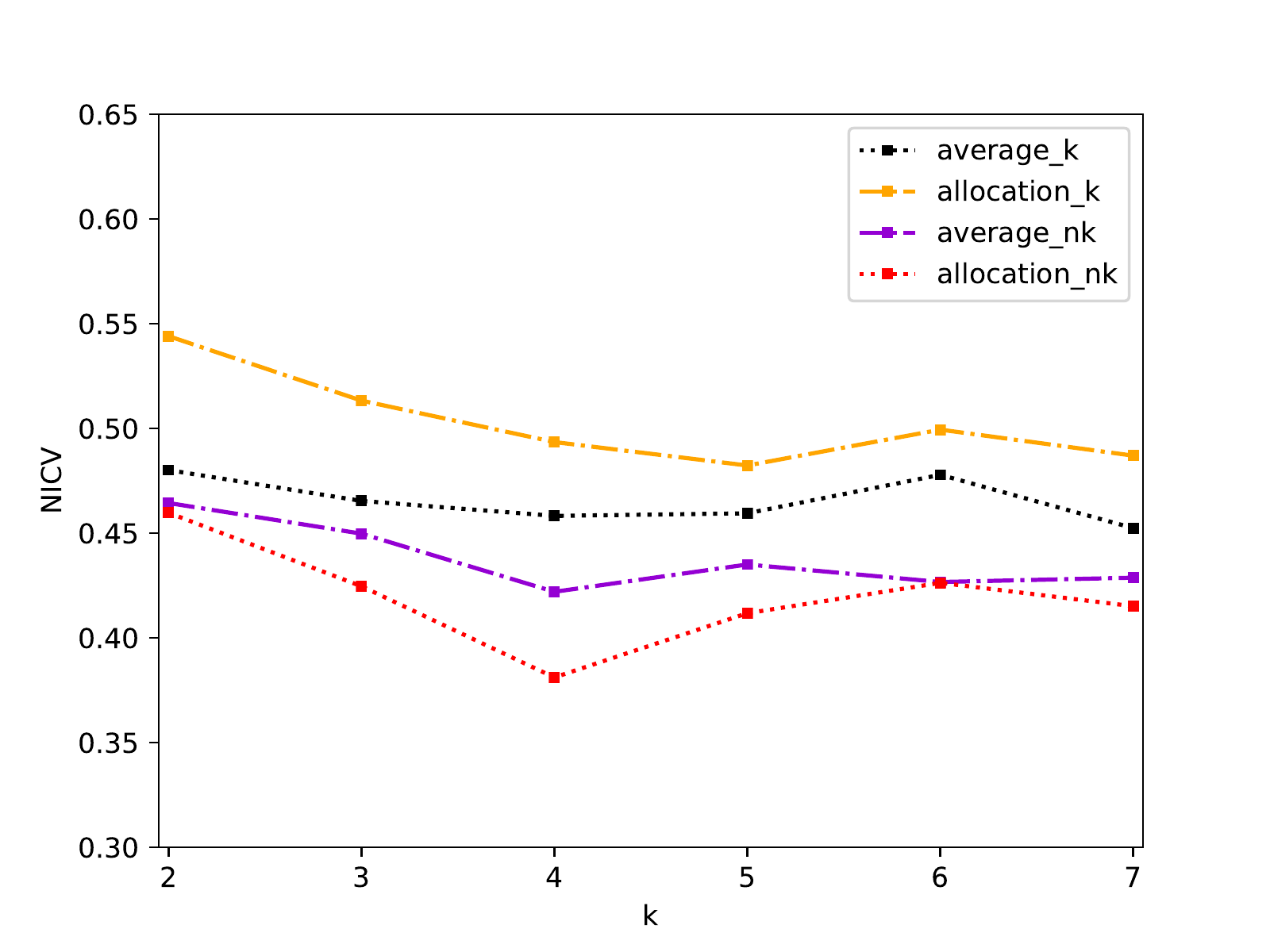}
\caption{Tripadvisor-review ($\epsilon=0.6$)}\label{fig:Tripadvisor-k}
\end{minipage}

\begin{minipage}[b]{0.5\linewidth}
\centering
\includegraphics[width=1\textwidth]{./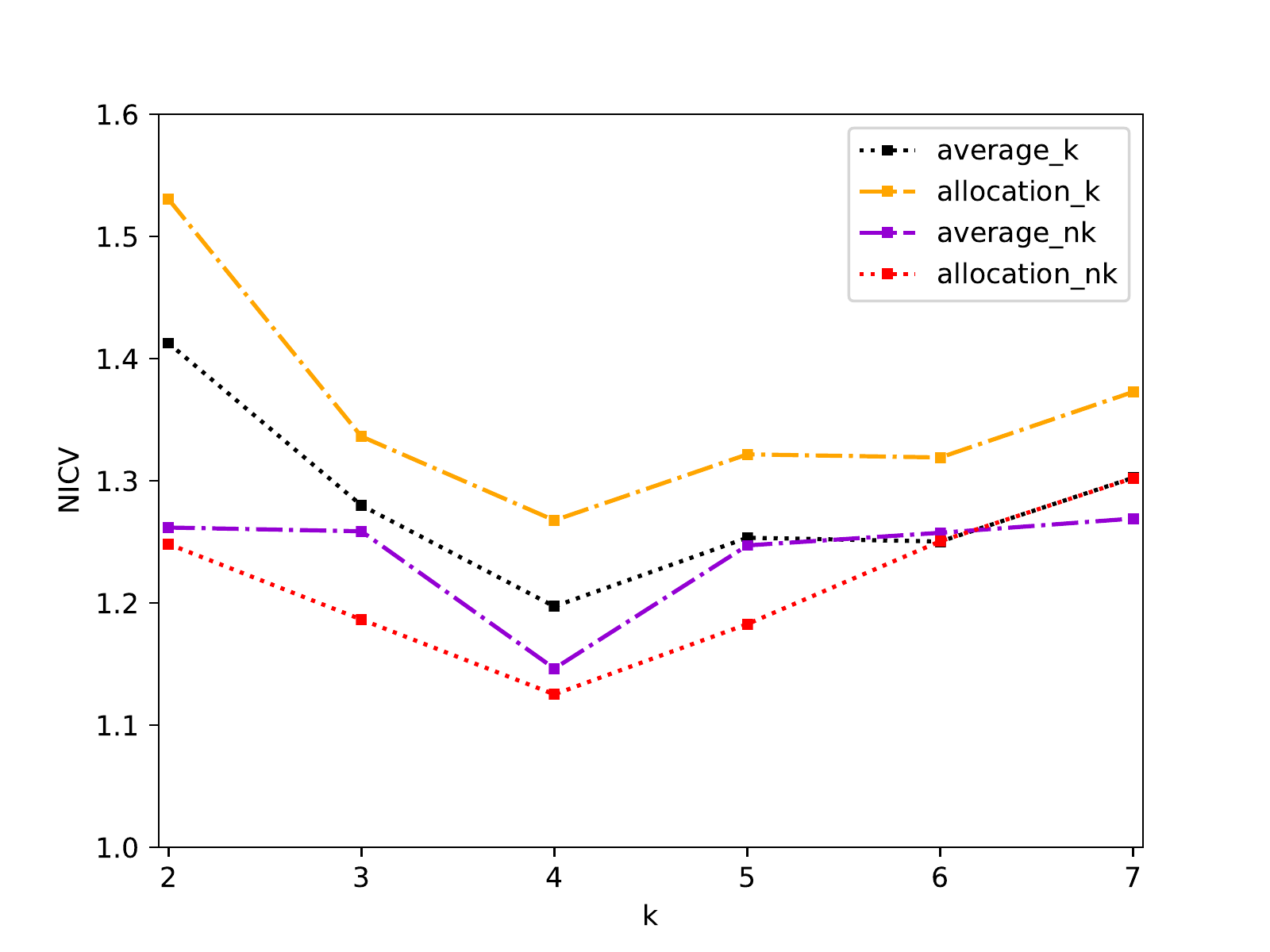}
\caption{Travel-review ($\epsilon=0.2$)}\label{fig:Travel-k}
\end{minipage}
\begin{minipage}[b]{0.5\linewidth}
\centering
\includegraphics[width=1\textwidth]{./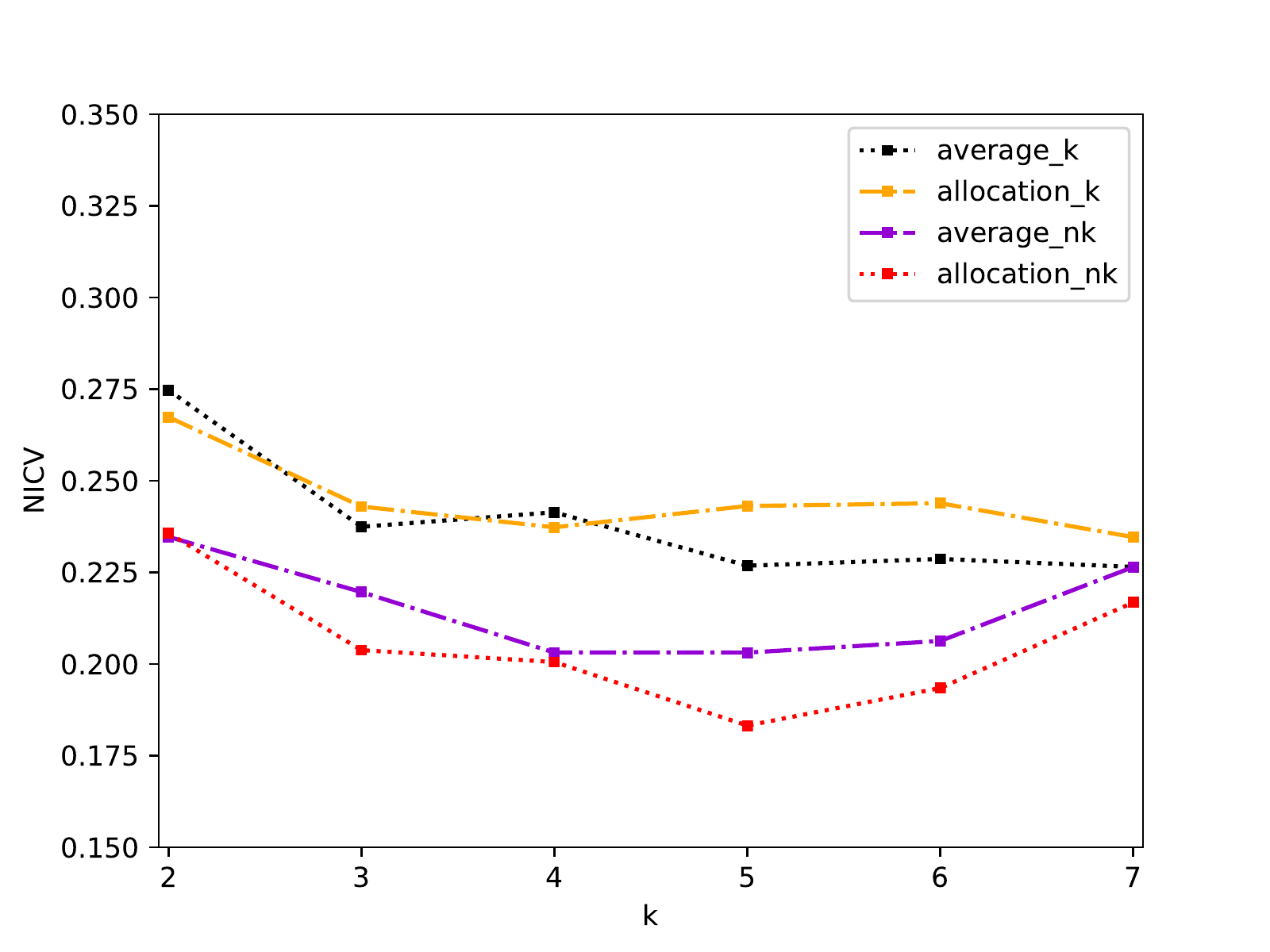}
\caption{Adult ($\epsilon=0.2$)}\label{fig:Adult-k}
\end{minipage}

\begin{minipage}[b]{0.5\linewidth}
\centering
\includegraphics[width=1\textwidth]{./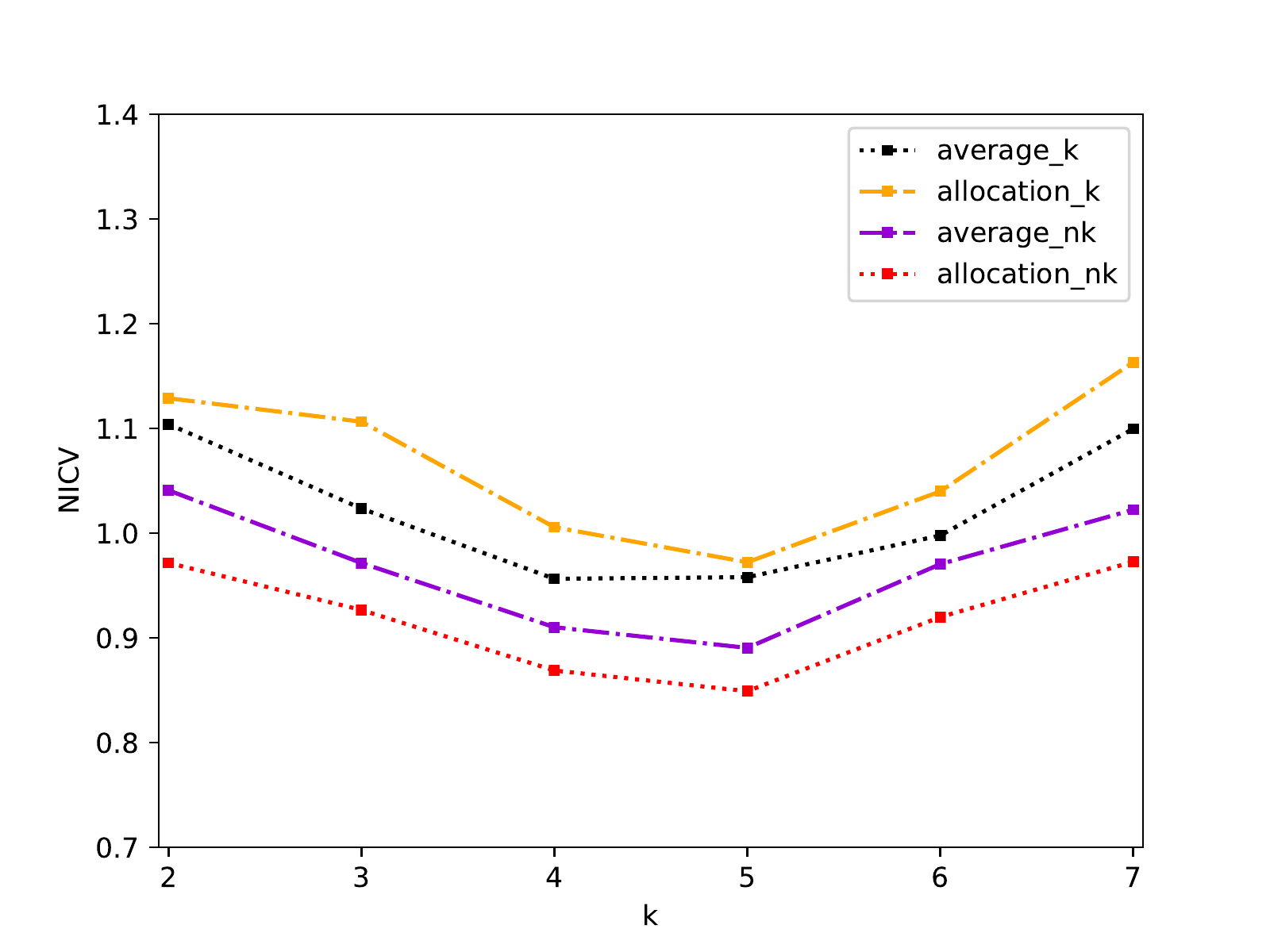}
\caption{Electrical ($\epsilon=0.2$)}\label{fig:Electrical-k}
\end{minipage}
\begin{minipage}[b]{0.5\linewidth}
\centering
\includegraphics[width=1\textwidth]{./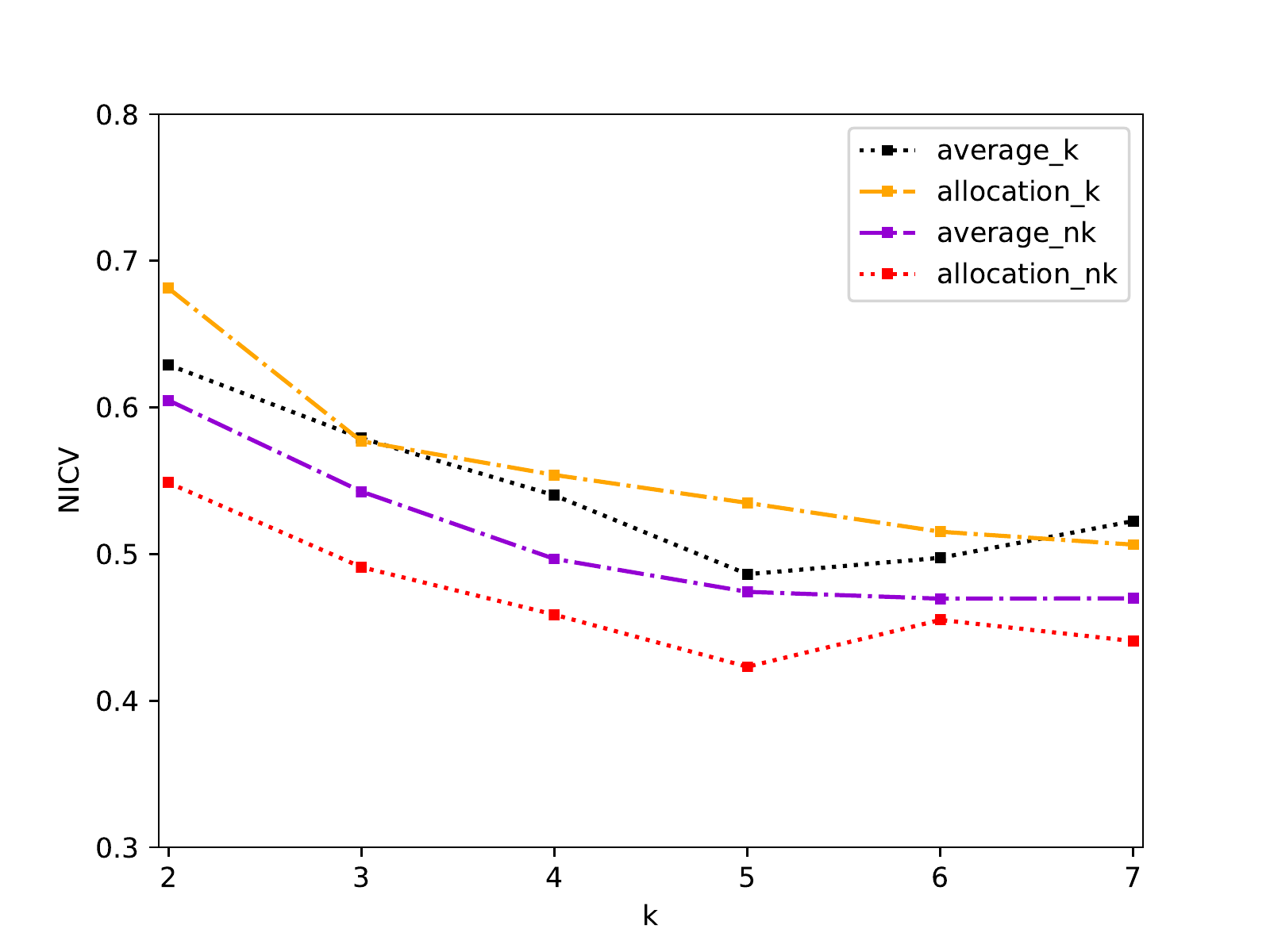}
\caption{Credit-card ($\epsilon=0.2$)}\label{fig:Credit-k}
\end{minipage}
\end{figure*}

Based on the experimental analysis above, we can conclude that the proposed algorithm combining both cluster merging and adaptive privacy budget allocation improves the clustering utility significantly, and it is superior to the state-of-the-art algorithms.

We can explain our experimental results with the stability theory by David et al. \cite{ben2006sober}. According to the definition, given a clustering algorithm, the stability is mainly determined by the probability distribution of the data. Specifically, if the data distribution is of some non-trivial symmetry structure, the algorithm would be unstable; otherwise, it would be stable. When adding differential privacy to the clustering algorithm, the data distribution is flattened. This makes the distribution is more symmetry, and the clustering becomes more unstable. In the extreme, when $\epsilon = 0$, the distribution is uniformly symmetry, and the clustering is completely unstable. Thus, adding differential privacy deteriorates the clustering. When merging the nearest clusters, the clustering algorithm makes the clusters more asymmetry by canceling out noises added, and the clustering becomes more stable. So, we can see that merging clusters improves the clustering utility. Furthermore, in the context of merging clusters, the privacy budget allocation works well. The underlying reason may be that the cluster merging allows more unstability in the first several iterations, while allows less in the last ones.

\section{CONCLUSION}\label{sec:conclusion}
In this paper, we have proposed a differentially private $k$-means clustering algorithm based on cluster merging. This algorithm improves the utility of $k$-means clustering by first partitioning the data into more clusters than required, and then merging the clusters into required number of clusters. It is shown that this cluster merging improves the clustering utility, and when combined with privacy budget allocation, it can further improves the utility. Extensive experiments show that our algorithm outperforms the state-of-the-art algorithms significantly. Besides, we only consider numerical data, without taking into account the non-numerical and mixed data. The future work is to design differentially private $k$-means clustering algorithms for more types of data.

%
%
%
%

\section*{References}

\bibliography{neuro}

\begin{thebibliography}{10}
\expandafter\ifx\csname url\endcsname\relax
  \def\url#1{\texttt{#1}}\fi
\expandafter\ifx\csname urlprefix\endcsname\relax\def\urlprefix{URL }\fi
\expandafter\ifx\csname href\endcsname\relax
  \def\href#1#2{#2} \def\path#1{#1}\fi

\bibitem{Chen2002Data}
M.~S. Chen, J.~Han, P.~S. Yu, Data mining: an overview from a database
  perspective, IEEE Transactions on Knowledge \& Data Engineering 8~(6) (2002)
  866--883.
\newblock \href {http://dx.doi.org/http://dx.doi.org/10.1109/69.553155}
  {\path{doi:http://dx.doi.org/10.1109/69.553155}}.

\bibitem{Tambe2016EFFECTIVE}
S.~Tambe, B.~Thakur, A.~Vishwakarma, Effective data mining using neural
  networks, IEEE Transactions on Knowledge \& Data Engineering 8~(6) (2016)
  957--961.

\bibitem{Grabmeier2002Techniques}
J.~Grabmeier, A.~Rudolph, Techniques of cluster algorithms in data mining, Data
  Mining \& Knowledge Discovery 6~(4) (2002) 303--360.

\bibitem{Patel2016The}
K.~M.~A. Patel, P.~Thakral, The best clustering algorithms in data mining, in:
  International Conference on Communication \& Signal Processing, IEEE, 2016.
\newblock \href
  {http://dx.doi.org/http://dx.doi.org/10.1109/ICCSP.2016.7754534}
  {\path{doi:http://dx.doi.org/10.1109/ICCSP.2016.7754534}}.

\bibitem{Sun2008Clustering}
J.~G. Sun, J.~Liu, L.~Y. Zhao, Clustering algorithms research, Journal of
  Software 19~(19).
\newblock \href
  {http://dx.doi.org/http://dx.doi.org/10.3724/SP.J.1001.2008.00048}
  {\path{doi:http://dx.doi.org/10.3724/SP.J.1001.2008.00048}}.

\bibitem{Antonenko2012Using}
P.~D. Antonenko, S.~Toy, D.~S. Niederhauser, Using cluster analysis for data
  mining in educational technology research, Educational Technology Research \&
  Development 60~(3) (2012) 383--398.
\newblock \href {http://dx.doi.org/http://dx.doi.org/10.1007/s11423-012-9235-8}
  {\path{doi:http://dx.doi.org/10.1007/s11423-012-9235-8}}.

\bibitem{Ding2015Research}
S.~Ding, F.~Wu, J.~Qian, H.~Jia, F.~Jin, Research on data stream clustering
  algorithms, Artificial Intelligence Review 43~(4) (2015) 593--600.
\newblock \href {http://dx.doi.org/http://dx.doi.org/10.1007/s10462-013-9398-7}
  {\path{doi:http://dx.doi.org/10.1007/s10462-013-9398-7}}.

\bibitem{Mohebbi2016Iterative}
A.~Mohebbi, Iterative big data clustering algorithms: a review, Software
  Practice \& Experience 46~(1) (2016) 107--129.
\newblock \href {http://dx.doi.org/http://dx.doi.org/10.1002/spe.2341}
  {\path{doi:http://dx.doi.org/10.1002/spe.2341}}.

\bibitem{Nissim2017Clustering}
K.~Nissim, U.~Stemmer, Clustering algorithms for the centralized and local
  models, arXiv preprint arXiv:1707.04766.

\bibitem{Stemmer2019Locally}
U.~Stemmer, Locally private k-means clustering, arXiv preprint
  arXiv:1907.02513.

\bibitem{javadi2017classification}
S.~Javadi, S.~Hashemy, K.~Mohammadi, K.~Howard, A.~Neshat, Classification of
  aquifer vulnerability using k-means cluster analysis, Journal of hydrology
  549 (2017) 27--37.
\newblock \href
  {http://dx.doi.org/http://dx.doi.org/10.1016/j.jhydrol.2017.03.060}
  {\path{doi:http://dx.doi.org/10.1016/j.jhydrol.2017.03.060}}.

\bibitem{han2018kclp}
G.~Han, H.~Wang, M.~Guizani, S.~Chan, W.~Zhang, Kclp: A k-means cluster-based
  location privacy protection scheme in wsns for iot, IEEE Wireless
  Communications 25~(6) (2018) 84--90.
\newblock \href {http://dx.doi.org/http://dx.doi.org/10.1109/MWC.2017.1800061}
  {\path{doi:http://dx.doi.org/10.1109/MWC.2017.1800061}}.

\bibitem{shakeel2018cloud}
P.~M. Shakeel, S.~Baskar, V.~S. Dhulipala, M.~M. Jaber, Cloud based framework
  for diagnosis of diabetes mellitus using k-means clustering, Health
  information science and systems 6~(1) (2018) 16.
\newblock \href {http://dx.doi.org/http://dx.doi.org/10.1007/s13755-018-0054-0}
  {\path{doi:http://dx.doi.org/10.1007/s13755-018-0054-0}}.

\bibitem{wu2018deep}
J.~Wu, Y.~Wang, Z.~Wu, Z.~Wang, A.~Veeraraghavan, Y.~Lin, Deep $ k $-means:
  Re-training and parameter sharing with harder cluster assignments for
  compressing deep convolutions, arXiv preprint arXiv:1806.09228.

\bibitem{reza2019rice}
M.~N. Reza, I.~S. Na, S.~W. Baek, K.-H. Lee, Rice yield estimation based on
  k-means clustering with graph-cut segmentation using low-altitude uav images,
  Biosystems engineering 177 (2019) 109--121.
\newblock \href
  {http://dx.doi.org/http://dx.doi.org/10.1016/j.biosystemseng.2018.09.014}
  {\path{doi:http://dx.doi.org/10.1016/j.biosystemseng.2018.09.014}}.

\bibitem{omrani2019land}
H.~Omrani, B.~Parmentier, M.~Helbich, B.~Pijanowski, The land transformation
  model-cluster framework: Applying k-means and the spark computing environment
  for large scale land change analytics, Environmental modelling \& software
  111 (2019) 182--191.
\newblock \href
  {http://dx.doi.org/http://dx.doi.org/10.1016/j.envsoft.2018.10.004}
  {\path{doi:http://dx.doi.org/10.1016/j.envsoft.2018.10.004}}.

\bibitem{xiong2014secure}
J.~Xiong, Z.~Yao, J.~Ma, F.~Li, X.~Liu, A secure self-destruction scheme with
  ibe for the internet content privacy, Chinese Journal of Computers 37~(1)
  (2014) 139--150.

\bibitem{gao2010approach}
F.~Gao, J.~He, X.~Wu, An approach for tracking privacy disclosure, in: The 6th
  International Conference on Networked Computing and Advanced Information
  Management, IEEE, 2010, pp. 294--299.

\bibitem{Dwork2006Differential}
C.~Dwork, Differential privacy, in: Proceedings of the 33rd international
  conference on Automata, Languages and Programming-Volume Part II,
  Springer-Verlag, 2006, pp. 1--12.
\newblock \href {http://dx.doi.org/http://dx.doi.org/10.1007/11787006_1}
  {\path{doi:http://dx.doi.org/10.1007/11787006_1}}.

\bibitem{Dwork2006Calibrating}
C.~Dwork, F.~McSherry, K.~Nissim, A.~Smith, Calibrating noise to sensitivity in
  private data analysis, in: Theory of cryptography conference, Springer, 2006,
  pp. 265--284.
\newblock \href {http://dx.doi.org/http://dx.doi.org/10.1007/11681878_14}
  {\path{doi:http://dx.doi.org/10.1007/11681878_14}}.

\bibitem{Dwork2008Differential}
C.~Dwork, Differential privacy: A survey of results, Theory and Applications of
  Models of Computation (2008) 1--19\href
  {http://dx.doi.org/http://dx.doi.org/10.1007/978-3-540-79228-4_1}
  {\path{doi:http://dx.doi.org/10.1007/978-3-540-79228-4_1}}.

\bibitem{dankar2012application}
F.~K. Dankar, K.~El~Emam, The application of differential privacy to health
  data, in: Proceedings of the 2012 Joint EDBT/ICDT Workshops, ACM, 2012, pp.
  158--166.
\newblock \href {http://dx.doi.org/http://dx.doi.org/10.1145/2320765.2320816}
  {\path{doi:http://dx.doi.org/10.1145/2320765.2320816}}.

\bibitem{Xiao2015Protecting}
Y.~Xiao, L.~Xiong, Protecting locations with differential privacy under
  temporal correlations, in: Proceedings of the 22nd ACM SIGSAC Conference on
  Computer and Communications Security, ACM, 2015, pp. 1298--1309.
\newblock \href {http://dx.doi.org/http://dx.doi.org/10.1145/2810103.2813640}
  {\path{doi:http://dx.doi.org/10.1145/2810103.2813640}}.

\bibitem{Yu2016Outlier}
Q.~Yu, Y.~Luo, C.~Chen, X.~Ding, Outlier-eliminated k-means clustering
  algorithm based on differential privacy preservation, Applied Intelligence
  45~(4) (2016) 1179--1191.
\newblock \href {http://dx.doi.org/http://dx.doi.org/10.1007/s10489-016-0813-z}
  {\path{doi:http://dx.doi.org/10.1007/s10489-016-0813-z}}.

\bibitem{su2016differentially}
D.~Su, J.~Cao, N.~Li, E.~Bertino, H.~Jin, Differentially private k-means
  clustering, in: Proceedings of the sixth ACM conference on data and
  application security and privacy, ACM, 2016, pp. 26--37.
\newblock \href {http://dx.doi.org/http://dx.doi.org/10.1145/2857705.2857708}
  {\path{doi:http://dx.doi.org/10.1145/2857705.2857708}}.

\bibitem{Li2016Differential}
N.~Li, L.~Min, S.~Dong, W.~Yang, Differential privacy: From theory to practice,
  Synthesis Lectures on Information Security Privacy \& Trust 8~(4) (2016)
  1--138.
\newblock \href
  {http://dx.doi.org/http://dx.doi.org/10.2200/S00735ED1V01Y201609SPT018}
  {\path{doi:http://dx.doi.org/10.2200/S00735ED1V01Y201609SPT018}}.

\bibitem{dwork2014algorithmic}
C.~Dwork, A.~Roth, et~al., The algorithmic foundations of differential privacy,
  Foundations and Trends{\textregistered} in Theoretical Computer Science
  9~(3--4) (2014) 211--407.
\newblock \href {http://dx.doi.org/http://dx.doi.org/10.1561/0400000042}
  {\path{doi:http://dx.doi.org/10.1561/0400000042}}.

\bibitem{Blum2005Practical}
A.~Blum, C.~Dwork, F.~McSherry, K.~Nissim, Practical privacy: The sulq
  framework, in: Twenty-Fourth ACM Sigmod-Sigact-Sigart Symposium on Principles
  of Database Systems, ACM, 2005, pp. 128--138.
\newblock \href {http://dx.doi.org/http://dx.doi.org/10.1145/1065167.1065184}
  {\path{doi:http://dx.doi.org/10.1145/1065167.1065184}}.

\bibitem{McSherry2010Privacy}
McSherry, Frank, Privacy integrated queries: An extensible platform for
  privacy-preserving data analysis, Communications of the Acm 53~(9) (2010)
  89--97.
\newblock \href {http://dx.doi.org/http://dx.doi.org/10.1145/1810891.1810916}
  {\path{doi:http://dx.doi.org/10.1145/1810891.1810916}}.

\bibitem{ben2006sober}
S.~Ben-David, U.~Von~Luxburg, D.~P{\'a}l, A sober look at clustering stability,
  in: International Conference on Computational Learning Theory, Springer,
  2006, pp. 5--19.

\end{thebibliography}

\end{document}